\providecommand{\U}[1]{\protect\rule{.1in}{.1in}}
\newtheorem{theorem}{Theorem}[section]
\newtheorem{corollary}[theorem]{Corollary}
\newtheorem{lemma}[theorem]{Lemma}
\newtheorem{problem}[theorem]{Problem}
\newtheorem{proposition}[theorem]{Proposition}
\newenvironment{proof}[1][Proof]{\noindent\textbf{#1.} }{\ \rule{0.5em}{0.5em}}
\begin{document}

\author{Vadim E. Levit\\Department of Computer Science\\Ariel University, Israel\\levitv@ariel.ac.il
\and Eugen Mandrescu\\Department of Computer Science\\Holon Institute of Technology, Israel\\eugen\_m@hit.ac.il}
\title{On the Critical Difference of Almost Bipartite Graphs}
\date{}
\maketitle

\begin{abstract}
A set $S\subseteq V$ is \textit{independent} in a graph $G=\left(  V,E\right)
$ if no two vertices from $S$ are adjacent. The \textit{independence number}
$\alpha(G)$ is the cardinality of a maximum independent set, while $\mu(G)$ is
the size of a maximum matching in $G$. If $\alpha(G)+\mu(G)$ equals the order
of $G$, then $G$ is called a \textit{K\"{o}nig-Egerv\'{a}ry graph
}\cite{dem,ster}. The number $d\left(  G\right)  =\max\{\left\vert
A\right\vert -\left\vert N\left(  A\right)  \right\vert :A\subseteq V\}$ is
called the \textit{critical difference} of $G$ \cite{Zhang} (where $N\left(
A\right)  =\left\{  v:v\in V,N\left(  v\right)  \cap A\neq\emptyset\right\}
$). It is known that $\alpha(G)-\mu(G)\leq d\left(  G\right)  $ holds for
every graph \cite{Levman2011a,Lorentzen1966,Schrijver2003}. In \cite{LevMan5}
it was shown that $d(G)=\alpha(G)-\mu(G)$ is true for every
K\"{o}nig-Egerv\'{a}ry graph.

A graph $G$ is \textit{(i)} \textit{unicyclic} if it has a unique cycle,
\textit{(ii)} \textit{almost bipartite} if it has only one odd cycle. It was
conjectured in \cite{LevMan2012a,LevMan2013a} and validated in
\cite{Bhattacharya2018} that $d(G)=\alpha(G)-\mu(G)$ holds for every unicyclic
non-K\"{o}nig-Egerv\'{a}ry graph $G$.

In this paper we prove that if $G$ is an almost bipartite graph of order
$n\left(  G\right)  $, then $\alpha(G)+\mu(G)\in\left\{  n\left(  G\right)
-1,n\left(  G\right)  \right\}  $. Moreover, for each of these two values, we
characterize the corresponding graphs. Further, using these findings, we show
that the critical difference of an almost bipartite graph $G$ satisfies
\[
d(G)=\alpha(G)-\mu(G)=\left\vert \mathrm{core}(G)\right\vert -\left\vert
N(\mathrm{core}(G))\right\vert ,
\]
where by \textrm{core}$\left(  G\right)  $ we mean the intersection of all
maximum independent sets.

\textbf{Keywords:} independent set, core, matching, critical set, critical
difference, bipartite graph, K\"{o}nig-Egerv\'{a}ry graph.

\end{abstract}

\section{Introduction}

Throughout this paper $G=(V,E)$ is a finite, undirected, loopless graph
without multiple edges, with vertex set $V=V(G)$ of cardinality $n\left(
G\right)  $, and edge set $E=E(G)$ of size $m\left(  G\right)  $. If $X\subset
V$, then $G[X]$ is the subgraph of $G$ spanned by $X$. By $G-W$ we mean the
subgraph $G[V-W]$, if $W\subset V(G)$. For $F\subset E(G)$, by $G-F$ we denote
the partial subgraph of $G$ obtained by deleting the edges of $F$, and we use
$G-e$, if $W$ $=\{e\}$. If $A,B$ $\subset V$ and $A\cap B=\emptyset$, then
$(A,B)$ stands for the set $\{e=ab:a\in A,b\in B,e\in E\}$. The neighborhood
of a vertex $v\in V$ is the set $N(v)=\{w:w\in V$ \textit{and} $vw\in E\}$,
and $N(A)=\cup\{N(v):v\in A\}$, $N[A]=A\cup N(A)$ for $A\subset V$. By
$C_{n},K_{n}$ we mean the chordless cycle on $n\geq$ $4$ vertices, and
respectively the complete graph on $n\geq1$ vertices.

Let us define the trace of a family $%
\mathcal{F}%
$of sets on the set $X$ as $%
\mathcal{F}%
|_{X}=\{F\cap X:F\in%
\mathcal{F}%
\}$.

A set $S$ of vertices is \textit{independent} if no two vertices from $S$ are
adjacent, and an independent set of maximum size will be referred to as a
\textit{maximum independent set}. The \textit{independence number }of $G$,
denoted by $\alpha(G)$, is the cardinality of a maximum independent
set\textit{\ }of $G$.

Let $\Omega(G)=\{S:S$ \textit{is a maximum independent set of} $G\}$,
\textrm{core}$(G)=\cap\{S:S\in\Omega(G)\}$ \cite{levm3}, and \textrm{corona}%
$(G)=\cup\{S:S\in\Omega(G)\}$ \cite{BorosGolLev}. An edge $e\in E(G)$ is
$\alpha$-\textit{critical} whenever $\alpha(G-e)>\alpha(G)$. Notice that
$\alpha(G)\leq\alpha(G-e)\leq\alpha(G)+1$ holds for each edge $e$.

The number $d(X)=\left\vert X\right\vert -\left\vert N(X)\right\vert $,
$X\subseteq V(G)$, is called the \textit{difference} of the set $X$. The
number $d(G)=\max\{d(X):X\subseteq V\}$ is called the \textit{critical
difference} of $G$, and a set $U\subseteq V(G)$ is \textit{critical} if
$d(U)=d(G)$ \cite{Zhang}. The number $id(G)=\max\{d(I):I\in\mathrm{Ind}(G)\}$
is called the \textit{critical independence difference} of $G$. If $A\subseteq
V(G)$ is independent and $d(A)=id(G)$, then $A$ is called \textit{critical
independent }\cite{Zhang}. Clearly, $d(G)\geq id(G)$ is true for every graph
$G$.

\begin{theorem}
\cite{Zhang} The equality $d(G)$ $=id(G)$ holds for every graph $G$.
\end{theorem}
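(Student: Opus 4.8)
Since every independent set is a subset of $V(G)$, the inequality $d(G)\geq id(G)$ is immediate, so the entire content is to produce an independent critical set, i.e. to establish $d(G)\leq id(G)$. The plan is to single out a \emph{canonical} critical set and show that it is forced to be independent; the natural candidate is the smallest one.

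First I would record the submodularity of the neighborhood function. Since $N(A\cup B)=N(A)\cup N(B)$ and $N(A\cap B)\subseteq N(A)\cap N(B)$, we obtain $|N(A\cup B)|+|N(A\cap B)|\leq |N(A)|+|N(B)|$, and therefore the difference $d(X)=|X|-|N(X)|$ is \emph{supermodular}: $d(A\cup B)+d(A\cap B)\geq d(A)+d(B)$. Consequently the family of critical sets (the maximizers of $d$) is closed under union and intersection, hence forms a lattice with a unique minimal member $U_{0}$, namely the intersection of all critical sets. In particular $U_{0}$ is itself critical and is contained in every critical set, so no proper subset of $U_{0}$ can be critical.

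Next I would analyze single deletions and, more generally, deletions of a subset from $U_{0}$. For $S\subseteq U_{0}$ set $P_{S}=\{w\in N(U_{0}): N(w)\cap U_{0}\subseteq S\}$, the vertices all of whose $U_{0}$-neighbors lie in $S$; then $N(U_{0}\setminus S)=N(U_{0})\setminus P_{S}$, whence $d(U_{0}\setminus S)=d(U_{0})-(|S|-|P_{S}|)$. Because $U_{0}$ is minimal, no $U_{0}\setminus S$ with $S\neq\emptyset$ can be critical, and since $d(U_{0})=d(G)$ is already the maximum, this forces $|P_{S}|<|S|$ for \emph{every} nonempty $S\subseteq U_{0}$. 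This is a deficiency (Hall-type) condition attached to $U_{0}$.

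It remains to prove that $U_{0}$ is independent, and this is the step I expect to be the main obstacle. Assuming for contradiction that $U_{0}$ contains an edge $xy$, one has $x\in N(U_{0})$, and taking $S_{0}=N(x)\cap U_{0}$ gives $x\in P_{S_{0}}\setminus S_{0}$, a local surplus of the function $|P_{S}|-|S|$. The hard part is to promote such a local surplus into a global one: I would set up the bipartite incidence between $U_{0}$ and $N(U_{0})$ and run a Hall / alternating-path argument (the defect form of K\"{o}nig's theorem) to extract a nonempty $S\subseteq U_{0}$ with $|P_{S}|\geq|S|$, contradicting the minimality condition above. Making this matching argument rigorous — in particular, correctly accounting for internal neighbors, the vertices of $U_{0}\cap N(U_{0})$, inside $P_{S}$ — is the crux. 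Once it is carried out, $U_{0}$ must be independent, and then $d(G)=d(U_{0})=id(G)$, as required.
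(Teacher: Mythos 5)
Your setup is correct as far as it goes: the supermodularity of $d$ is right, the family of critical sets is therefore closed under unions and intersections, the minimum critical set $U_{0}$ (the intersection of all critical sets) exists and is critical, and your deficiency condition — $|P_{S}|<|S|$ for every nonempty $S\subseteq U_{0}$ — is correctly derived from minimality. But the proof has a genuine gap exactly where you flag it: the independence of $U_{0}$, which is the entire content of the theorem, is never established. The observation that $x\in P_{S_{0}}\setminus S_{0}$ for $S_{0}=N(x)\cap U_{0}$ contradicts nothing by itself (a set can have an element outside $S_{0}$ and still be smaller than $S_{0}$), and the Hall/K\"{o}nig-type argument you defer to is never carried out. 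An honestly labelled missing step is still a missing step, so as written this does not prove $d(G)\leq id(G)$.

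What you may find frustrating is that your own machinery closes the gap in two lines, with no matching theory at all. Take $S=U_{0}\cap N(U_{0})$, which is nonempty precisely when $U_{0}$ is not independent. If $w\in S$ and $u\in N(w)\cap U_{0}$, then $u\in N(U_{0})$ (because $w\in U_{0}$), hence $u\in S$; thus $N(w)\cap U_{0}\subseteq S$ for every $w\in S$, i.e., $S\subseteq P_{S}$ and $|P_{S}|\geq|S|$. Your deficiency condition then forces $S=\emptyset$, so $U_{0}$ is independent and $d(G)=d(U_{0})\leq id(G)$. Even more directly, one can skip the lattice argument entirely: for an \emph{arbitrary} critical set $U$, the set $A=U\setminus N(U)$ is independent, and $N(A)\subseteq N(U)$ is disjoint from $U\cap N(U)$, so $|A|-|N(A)|\geq|U|-|N(U)|$, making $A$ a critical independent set — this is the standard short proof. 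For comparison: the paper itself offers no proof of this statement (it is quoted from Zhang, whose original argument runs through network flow/LP duality), so your lattice-theoretic route, once completed as above, is a legitimately more elementary alternative; it just is not yet a proof.
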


For a graph $G$, let denote $\mathrm{\ker}(G)=\bigcap\left\{  S:S\text{
\textit{is a critical independent set}}\right\}  $. It is known that
\ $\mathrm{\ker}(G)\subseteq\mathrm{core}(G)$ is true for every graph
\cite{Levman2011a}, while the equality holds for bipartite graphs
\cite{Levman2011b}.

A matching (i.e., a set of non-incident edges of $G$) of maximum cardinality
$\mu(G)$ is a \textit{maximum matching}, and a \textit{perfect matching} is
one covering all vertices of $G$. An edge $e\in E(G)$ is $\mu$%
-\textit{critical }provided $\mu(G-e)<\mu(G)$.

\begin{theorem}
\label{prop1} For any graph $G$, the following assertions are true:

\emph{(i)} \cite{LevMan3} no $\alpha$-critical edge has an endpoint in
$N[\mathrm{core}(G)]$;

\emph{(ii)} \cite{BorosGolLev} there is a matching from $S-\mathrm{core}(G)$
into $\mathrm{corona}(G)-S$, for each $S\in\Omega(G)$;

\emph{(iii) }\cite{levm3} if $G$ is a connected bipartite graph with $n\left(
G\right)  \geq2$, then $\alpha(G)>n\left(  G\right)  /2\ $%
if\ and\ only\ if$\ \left\vert \mathrm{core}(G)\right\vert \geq2$.
\end{theorem}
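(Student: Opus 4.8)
The three assertions differ in character, so the plan is to prove each on its own: (i) by a one-step exchange, (ii) by Hall's theorem on an auxiliary bipartite graph, and (iii) by the K\"{o}nig-Egerv\'{a}ry identity. For (i), let $e=uv$ be $\alpha$-critical, so some set $T$ is independent in $G-e$ with $\left\vert T\right\vert =\alpha(G)+1$. As $T$ cannot be independent in $G$, it contains an edge, and since that edge is absent in $G-e$ it must be $e$; thus $e$ is the unique edge of $G$ inside $T$, forcing $u,v\in T$, so that $T-u$ and $T-v$ both belong to $\Omega(G)$. Since $T-u$ omits $u$ we get $u\notin\mathrm{core}(G)$, and symmetrically $v\notin\mathrm{core}(G)$. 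Moreover, if some $w\in\mathrm{core}(G)$ were adjacent to $u$, then $w$ and $u$ would both lie in $T-v$ (the core meets every maximum independent set), contradicting its independence; hence $u\notin N(\mathrm{core}(G))$, and likewise $v\notin N(\mathrm{core}(G))$. Therefore neither endpoint of $e$ lies in $N[\mathrm{core}(G)]$.

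For (ii) I would first record the pairwise lemma: for any $S,T\in\Omega(G)$ the bipartite graph $G[(S-T)\cup(T-S)]$ admits a perfect matching between $S-T$ and $T-S$. This follows from Hall's condition, since if some $A\subseteq S-T$ had $\left\vert N(A)\cap(T-S)\right\vert <\left\vert A\right\vert$, then $(T\setminus N(A))\cup A$ would be independent — no edge joins $A$ to the rest of $T$, as $A\subseteq S$ has no neighbour in $S$ — and strictly larger than $T$, contradicting $\left\vert T\right\vert =\alpha(G)$. Fixing $S\in\Omega(G)$, I would then apply Hall to the bipartite graph $H$ on parts $X=S-\mathrm{core}(G)$ and $Y=\mathrm{corona}(G)-S$ carrying the $G$-edges between them, and attempt to match a given $A\subseteq X$ by choosing $T\in\Omega(G)$ that omits as much of $A$ as possible and routing $A\setminus T$ into $T-S\subseteq Y$ via the pairwise matching.

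The hard part is precisely completing this last step. Each vertex of $S-\mathrm{core}(G)$ is omitted by some maximum independent set, but no single $T\in\Omega(G)$ need omit all of a prescribed $A$ at once, so the required matching cannot simply be read off from one pair $(S,T)$. I expect this to demand either an iterative rotation — repeatedly invoking the pairwise lemma, concatenating the resulting alternating paths, and checking that the combined matching produces no collisions inside $Y$ — or a global deficiency argument on $G[\mathrm{corona}(G)\setminus\mathrm{core}(G)]$ in the spirit of the Nemhauser--Trotter decomposition.

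For (iii), since every bipartite graph is a K\"{o}nig-Egerv\'{a}ry graph we have $\alpha(G)+\mu(G)=n(G)$, and because each colour class is independent we always have $\alpha(G)\geq n(G)/2$; hence $\alpha(G)>n(G)/2$ is equivalent to $\mu(G)<n(G)/2$, i.e. to the absence of a perfect matching. The claim thus reduces to showing that a connected bipartite graph with $n(G)\geq2$ has no perfect matching if and only if $\left\vert \mathrm{core}(G)\right\vert \geq2$. For the forward implication I would start from a vertex left unsaturated by a maximum matching and use connectivity together with alternating paths to exhibit two vertices common to all maximum independent sets; for the converse I would use the characterization $w\in\mathrm{core}(G)\iff\alpha(G-w)=\alpha(G)-1$ and matching exchanges along alternating paths to show that a perfect matching leaves at most one such vertex.
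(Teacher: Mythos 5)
The paper states this theorem without proof --- it is a compilation of three known results quoted from \cite{LevMan3}, \cite{BorosGolLev}, and \cite{levm3} --- so your attempt has to be judged on its own completeness rather than against an internal argument. Your part (i) is correct and complete: an independent set $T$ of $G-e$ with $\vert T\vert=\alpha(G)+1$ can contain no edge of $G$ other than $e$, hence $u,v\in T$ and $T-u,\,T-v\in\Omega(G)$; then $u\notin\mathrm{core}(G)$, and $\mathrm{core}(G)\subseteq T-v$ together with $u\in T-v$ rules out any neighbor of $u$ inside $\mathrm{core}(G)$. Every step checks out.

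Parts (ii) and (iii), however, are plans rather than proofs, and in (ii) the step you leave open is exactly the substance of the theorem. The pairwise lemma you establish (a matching from $S-T$ into $T-S$ for $S,T\in\Omega(G)$, via Hall's condition) is fine, but, as you note yourself, a prescribed $A\subseteq S-\mathrm{core}(G)$ need not be avoided by any single $T\in\Omega(G)$, and the matchings obtained from different sets $T$ can collide inside $\mathrm{corona}(G)-S$. Resolving that collision is the actual work: the known proofs (in \cite{BorosGolLev}, and later via the set and collection lemma of \cite{LevManLemma2014}) run an induction over the whole family of maximum independent sets, not a single application of the pairwise lemma, and no such argument appears in your text. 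Part (iii) is likewise unfinished. Your reduction to ``no perfect matching if and only if $\vert\mathrm{core}(G)\vert\geq2$'' is valid, and the forward direction can indeed be completed along the lines you sketch: in a K\"{o}nig-Egerv\'{a}ry graph every vertex missed by a maximum matching lies in every $S\in\Omega(G)$ (each of the $\mu(G)=n(G)-\alpha(G)$ matching edges has exactly one endpoint outside $S$, so all vertices outside $S$ are saturated), and exchanging the matching edge at a neighbor of a missed vertex produces a second core vertex --- this is where connectivity and $n(G)\geq2$ enter. But you never carry this out, and for the converse your alternating-path plan is both vaguer and heavier than needed: the short argument is that a perfect matching forces the two color classes to be maximum independent sets, so $\mathrm{core}(G)$ lies in their intersection, which is empty. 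As it stands, only assertion (i) is actually proven.
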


It is well-known that $\lfloor n\left(  G\right)  /2\rfloor+1\leq\alpha
(G)+\mu(G)\leq n\left(  G\right)  $ hold for every graph $G$. If
$\alpha(G)+\mu(G)=n\left(  G\right)  $, then $G$ is called a
\textit{K\"{o}nig-Egerv\'{a}ry graph }\cite{dem,ster}. Various properties of
K\"{o}nig-Egerv\'{a}ry graphs are presented in
\cite{bourhams1,bourpull,KoNgPeis,levm2,levm4,LevMan09}. It is known that
every bipartite graph is a K\"{o}nig-Egerv\'{a}ry\emph{ }graph
\cite{koen,eger}. This class includes also non-bipartite graphs (see, for
instance, the graph $G$ in Figure \ref{fig112}).

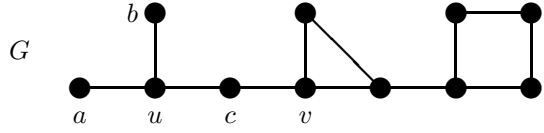
\begin{figure}[h]
\setlength{\unitlength}{1cm}\begin{picture}(5,1.8)\thicklines
\multiput(4,0.5)(1,0){7}{\circle*{0.29}}
\multiput(5,1.5)(2,0){2}{\circle*{0.29}}
\multiput(9,1.5)(1,0){2}{\circle*{0.29}}
\put(4,0.5){\line(1,0){6}}
\put(5,0.5){\line(0,1){1}}
\put(7,1.5){\line(1,-1){1}}
\put(7,0.5){\line(0,1){1}}
\put(9,0.5){\line(0,1){1}}
\put(9,1.5){\line(1,0){1}}
\put(10,0.5){\line(0,1){1}}
\put(4,0.1){\makebox(0,0){$a$}}
\put(4.7,1.5){\makebox(0,0){$b$}}
\put(6,0.1){\makebox(0,0){$c$}}
\put(5,0.1){\makebox(0,0){$u$}}
\put(7,0.1){\makebox(0,0){$v$}}
\put(3.2,1){\makebox(0,0){$G$}}
\end{picture}\caption{$G$ is a K\"{o}nig-Egerv\'{a}ry graph with $\alpha(G)=6$
and $\mu(G)=5$.}%
\label{fig112}%
\end{figure}

\begin{theorem}
\label{th1} If $G$ is of a K\"{o}nig-Egerv\'{a}ry graph, then

\emph{(i)} \cite{levm4} every maximum matching matches $N($\textrm{core}$(G))
$ into \textrm{core}$(G)$;

\emph{(ii) }\cite{LevMan5} $d(G)=\left\vert \mathrm{core}(G)\right\vert
-\left\vert N(\mathrm{core}(G))\right\vert =\alpha(G)-\mu(G)$.
\end{theorem}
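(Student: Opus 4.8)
The plan is to derive both parts from a single counting observation about how an \emph{arbitrary} maximum matching interacts with the maximum independent sets, and then to intersect that information over all of $\Omega(G)$. Fix a maximum matching $M$ and any $S\in\Omega(G)$. Since $G$ is K\"{o}nig-Egerv\'{a}ry, $\left\vert M\right\vert =\mu(G)=n(G)-\alpha(G)=\left\vert V(G)-S\right\vert $, and because $S$ is independent no edge of $M$ lies inside $S$; a one-line count (each of the $\mu(G)$ edges must spend at least one endpoint on the $\mu(G)$-element set $V(G)-S$) then forces every edge of $M$ to have exactly one endpoint in $S$ and one in $V(G)-S$, so $M$ saturates $V(G)-S$ and matches it into $S$. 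I would record this as the working lemma, valid for every $S\in\Omega(G)$ simultaneously with the same $M$.

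From this lemma two facts follow. First, every $M$-unmatched vertex $y$ lies in $\mathrm{core}(G)$: if some $S^{\prime}\in\Omega(G)$ missed $y$, then $y\in V(G)-S^{\prime}$ would be saturated by $M$, a contradiction; hence $y$ belongs to every maximum independent set. Second, and this is the heart of part (i): let $w\in N(\mathrm{core}(G))$. For each $S^{\prime}\in\Omega(G)$ we have $\mathrm{core}(G)\subseteq S^{\prime}$ with $S^{\prime}$ independent, so $w\in N(\mathrm{core}(G))\subseteq V(G)-S^{\prime}$; the lemma then places the $M$-partner of $w$ inside $S^{\prime}$. As this holds for every $S^{\prime}$, the partner lies in $\bigcap\Omega(G)=\mathrm{core}(G)$, which is exactly the assertion of (i). I expect this intersection step to be the conceptual crux: the payoff comes from the fact that the single matching $M$ crosses the cut $(S^{\prime},V(G)-S^{\prime})$ for \emph{all} maximum independent sets at once, a feature that is easy to overlook if one fixes a single $S$ too early.

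For part (ii) I would count along the edges of $M$ meeting $\mathrm{core}(G)$. Since $\mathrm{core}(G)$ is independent, every such edge sends its other endpoint into $N(\mathrm{core}(G))$, and by (i) each of the $\left\vert N(\mathrm{core}(G))\right\vert $ vertices of $N(\mathrm{core}(G))$ is matched into $\mathrm{core}(G)$; hence exactly $\left\vert N(\mathrm{core}(G))\right\vert $ vertices of $\mathrm{core}(G)$ are $M$-saturated, leaving $\left\vert \mathrm{core}(G)\right\vert -\left\vert N(\mathrm{core}(G))\right\vert $ of them unmatched. On the other hand, $M$ leaves exactly $n(G)-2\mu(G)=\alpha(G)-\mu(G)$ vertices unmatched, and by the first consequence above all of these lie in $\mathrm{core}(G)$; equating the two counts yields $\left\vert \mathrm{core}(G)\right\vert -\left\vert N(\mathrm{core}(G))\right\vert =\alpha(G)-\mu(G)$.

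It remains to identify this common value with $d(G)$. The inequality $\alpha(G)-\mu(G)\leq d(G)$ is available from the introduction, and $\mathrm{core}(G)$ being independent gives $d(G)=id(G)\geq\left\vert \mathrm{core}(G)\right\vert -\left\vert N(\mathrm{core}(G))\right\vert =\alpha(G)-\mu(G)$ through the identity $d(G)=id(G)$ of Zhang. For the reverse inequality I would reuse the working lemma: for any independent set $I$, the edges of $M$ meeting $I$ inject the $M$-saturated vertices of $I$ into $N(I)$, so $I$ contributes at least $\left\vert I\right\vert -\left\vert N(I)\right\vert $ unmatched vertices, whence $d(I)\leq\alpha(G)-\mu(G)$; maximizing over $I$ and invoking $d(G)=id(G)$ gives $d(G)\leq\alpha(G)-\mu(G)$. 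Combining the inequalities closes the proof.
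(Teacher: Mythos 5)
Your proof is correct, but there is nothing in the paper to compare it against: Theorem \ref{th1} is quoted there without proof, part \emph{(i)} from \cite{levm4} and part \emph{(ii)} from \cite{LevMan5}, so your proposal is a genuinely self-contained derivation rather than a variant of an in-paper argument. Its engine is the cut-crossing lemma you isolate at the start: in a K\"{o}nig-Egerv\'{a}ry graph, for a fixed maximum matching $M$ and \emph{every} $S\in\Omega(G)$, each edge of $M$ has exactly one endpoint in $S$ and $M$ saturates $V(G)-S$; your counting argument for this (disjoint edges, $\left\vert M\right\vert =\left\vert V(G)-S\right\vert$, no edge inside $S$) is sound. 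From it you correctly extract: (a) every $M$-exposed vertex lies in $\mathrm{core}(G)$; (b) part \emph{(i)}, by the intersection trick --- the $M$-partner of $w\in N(\mathrm{core}(G))$ lands in every $S^{\prime}\in\Omega(G)$ at once, hence in $\mathrm{core}(G)$; and (c) the count $\left\vert \mathrm{core}(G)\right\vert -\left\vert N(\mathrm{core}(G))\right\vert =\alpha(G)-\mu(G)$, since exactly the $\left\vert N(\mathrm{core}(G))\right\vert$ core vertices matched by \emph{(i)} are saturated while all $n(G)-2\mu(G)=\alpha(G)-\mu(G)$ exposed vertices sit inside the core. The identification with $d(G)$ also checks out: the lower bound $d(G)\geq\left\vert \mathrm{core}(G)\right\vert -\left\vert N(\mathrm{core}(G))\right\vert$ needs only the definition of $d(G)$ as a maximum over all subsets (invoking Zhang's identity there is harmless but superfluous), and the upper bound $d(G)\leq\alpha(G)-\mu(G)$ correctly combines the injection of the $M$-saturated vertices of an arbitrary independent set $I$ into $N(I)$ with the identity $d(G)=id(G)$ stated in the paper's introduction. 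One cosmetic remark: that last step does not really ``reuse the working lemma'' as you claim --- it uses only the trivial fact that a matching injects the saturated vertices of an independent set into its neighborhood, with K\"{o}nig-Egerv\'{a}ry-ness entering only through $n(G)-2\mu(G)=\alpha(G)-\mu(G)$ --- but this does not affect correctness.
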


The graph $G$ is \textit{unicyclic} if it has a unique cycle. We call a graph
$G$ \textit{(edge)\ almost bipartite} if it has a unique odd cycle, denoted by
$C=\left(  V(C),E\left(  C\right)  \right)  $. Since $C$ is unique, there is
no other cycle of $G$ sharing vertices with $C$. Let
\[
N_{1}(C)=\{v:v\in V-V(C),N(v)\cap V(C)\neq\emptyset\},
\]
and $B_{x}=(V_{x},E_{x})$ be the bipartite connected\emph{ }subgraph of $G-xy$
containing $x$, where $x\in N_{1}(C),y\in V(C)$. Clearly, every unicyclic
graph with an odd cycle is almost bipartite.

\begin{figure}[h]
\setlength{\unitlength}{1cm}\begin{picture}(5,1.8)\thicklines
\multiput(4,0.5)(1,0){8}{\circle*{0.29}}
\multiput(5,1.5)(1,0){4}{\circle*{0.29}}
\multiput(10,1.5)(1,0){2}{\circle*{0.29}}
\put(10,1.5){\line(1,0){1}}
\put(10,0.5){\line(0,1){1}}
\put(11,0.5){\line(0,1){1}}
\put(4,0.5){\line(1,0){7}}
\put(5,1.5){\line(1,-1){1}}
\put(6,0.5){\line(0,1){1}}
\put(7,1.5){\line(1,0){1}}
\put(8,1.5){\line(1,-1){1}}
\put(7,0.5){\line(0,1){1}}
\put(4,0.1){\makebox(0,0){$u$}}
\put(5,0.1){\makebox(0,0){$v$}}
\put(6,0.1){\makebox(0,0){$c$}}
\put(7,0.1){\makebox(0,0){$x$}}
\put(8,0.1){\makebox(0,0){$w$}}
\put(9,0.1){\makebox(0,0){$y$}}
\put(4.7,1.5){\makebox(0,0){$a$}}
\put(5.7,1.5){\makebox(0,0){$b$}}
\put(6.7,1.5){\makebox(0,0){$d$}}
\put(8.3,1.5){\makebox(0,0){$t$}}
\put(10,0.1){\makebox(0,0){$p$}}
\put(11,0.1){\makebox(0,0){$q$}}
\put(9.7,1.5){\makebox(0,0){$r$}}
\put(11.3,1.5){\makebox(0,0){$s$}}
\put(3.1,1){\makebox(0,0){$G$}}
\end{picture}\caption{$G$ is a near bipartite non-K\"{o}nig-Egerv\'{a}ry graph
with $\alpha(G)=7$ and $\mu(G)=6$.}%
\label{fig53}%
\end{figure}
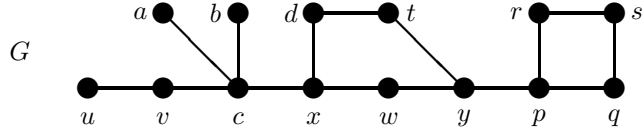

The smallest number of edges that have to be deleted from a graph to obtain a
bipartite graph is called the \textit{bipartite edge frustration} of $G$ and
denoted by $\varphi\left(  G\right)  $ \cite{DoVuk2007,Yara}. Thus, $G$ is an
almost bipartite graph whenever $\varphi\left(  G\right)  =1$.

In this paper we analyze the relationship between several parameters of a
almost bipartite graph $G$, namely, $\mathrm{core}(G)$, $d(G)$, $\alpha\left(
G\right)  $, and $\mu\left(  G\right)  $.

\section{Results}

\begin{lemma}
\label{lem1}If $G$ is a almost bipartite graph, then there is an edge $e\in
E\left(  C\right)  $, such that $\mu(G-e)=\mu(G)$.
\end{lemma}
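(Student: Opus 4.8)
The plan is to show that some edge of the odd cycle $C$ is absent from at least one maximum matching; deleting such an edge cannot lower the matching number, and that is precisely what the lemma asks for. First I would fix an arbitrary maximum matching $M$ of $G$, so that $|M|=\mu(G)$. The key observation is that $M\cap E(C)$ is itself a matching lying entirely inside the cycle $C$. Since $C$ is an odd cycle, say on $2k+1$ vertices, any matching contained in $C$ covers at most $2k$ of its vertices and hence has at most $k$ edges. Therefore $|M\cap E(C)|\leq k<2k+1=|E(C)|$, and by a straightforward counting comparison there must exist an edge $e\in E(C)$ with $e\notin M$.

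Having produced such an $e$, I would note that $M$ is still a matching of $G-e$, because $M$ never used the deleted edge; consequently $\mu(G-e)\geq|M|=\mu(G)$. Combined with the elementary fact that deleting a single edge can never increase the size of a maximum matching, i.e. $\mu(G-e)\leq\mu(G)$, this forces $\mu(G-e)=\mu(G)$. Thus $e\in E(C)$ is the desired edge, completing the argument.

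The only quantitative input is the bound that a matching inside an odd cycle on $2k+1$ vertices has at most $k$ edges, so the $2k+1$ edges of $C$ cannot all be simultaneously used by one matching. For this reason there is no real obstacle: the statement reduces to a pigeonhole comparison between $|E(C)|$ and the largest matching $C$ can accommodate, and no property special to almost bipartite graphs is needed beyond the mere existence of the odd cycle $C$. The uniqueness of $C$, which matters for the later results, plays no role in this particular lemma.
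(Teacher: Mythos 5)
Your proof is correct and rests on the same observation as the paper's: since a matching can never contain two incident edges, no maximum matching can use all the edges of the cycle $C$, so some edge $e\in E(C)$ is avoided by a maximum matching and hence satisfies $\mu(G-e)=\mu(G)$. The paper phrases this as "at most one of any two consecutive edges of $C$ can be $\mu$-critical," while you make the pigeonhole count explicit with a fixed maximum matching $M$, which is just a more self-contained rendering of the identical idea.
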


\begin{proof}
For every pair of edges, consecutive on $C$, only one of them may belong to
every maximum matching of $G$. In other words, at most one of the edges could
be $\mu$-critical.
\end{proof}

Notice that $\alpha(G)\leq\alpha(G-e)\leq\alpha(G)+1$ holds for each edge $e$.
Every edge of the unique odd cycle could be $\alpha$-critical; e.g., the graph
$G$ from Figure \ref{fig53}.

\begin{lemma}
\label{lem0}\cite{LevMan2012} For every bipartite graph $H$, a vertex
$v\in\mathrm{core}(H)$ if and only if there exists a maximum matching that
does not saturate $v$.
\end{lemma}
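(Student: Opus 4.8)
The plan is to translate each side of the stated equivalence into a single numerical condition on the vertex-deleted graph $H-v$, and then bridge the two conditions through König's theorem, which applies because both $H$ and $H-v$ are bipartite (hence K\"{o}nig-Egerv\'{a}ry). First I would record the two elementary one-vertex dichotomies $\alpha(H)-1\leq\alpha(H-v)\leq\alpha(H)$ and $\mu(H)-1\leq\mu(H-v)\leq\mu(H)$: deleting a single vertex can remove at most one vertex from a maximum independent set and at most one edge from a maximum matching, while an independent set (resp. matching) of $H-v$ is still one of $H$.

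Next I would establish the two key characterizations. On the independence side, $v\in\mathrm{core}(H)$ means that every maximum independent set of $H$ contains $v$, and this is equivalent to $\alpha(H-v)=\alpha(H)-1$: if some maximum independent set avoided $v$ it would be an independent set of $H-v$ of size $\alpha(H)$, forcing $\alpha(H-v)=\alpha(H)$, and conversely if $\alpha(H-v)=\alpha(H)-1$ then no maximum independent set can avoid $v$. On the matching side, some maximum matching of $H$ misses $v$ precisely when $\mu(H-v)=\mu(H)$, since such a matching is a matching of $H-v$ of size $\mu(H)$, and conversely a maximum matching of $H-v$ of size $\mu(H)$ is a maximum matching of $H$ missing $v$.

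Finally I would combine these via König's theorem. Since $H-v$ is bipartite it is K\"{o}nig-Egerv\'{a}ry, so $\alpha(H-v)+\mu(H-v)=n(H)-1$, and likewise $\alpha(H)+\mu(H)=n(H)$. Substituting $\alpha(H-v)=\bigl(n(H)-1\bigr)-\mu(H-v)$ into the core condition $\alpha(H-v)=\alpha(H)-1=\bigl(n(H)-\mu(H)\bigr)-1$ collapses it exactly to $\mu(H-v)=\mu(H)$, which is the matching condition. This single chain of equalities gives both directions of the equivalence at once. The only point requiring care is to keep the two dichotomies sharp, arguing that each condition is genuinely \emph{equivalent} to the relevant numerical equality rather than merely implied by it; this follows immediately from the observation that a maximum matching (resp. maximum independent set) of $H$ missing $v$ is the same object as a maximum one of $H-v$ of the appropriate size, so no genuinely hard step arises and the content lies entirely in applying the König identity to both $H$ and $H-v$.
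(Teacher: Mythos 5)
Your proof is correct, but there is nothing in this paper to compare it against: the paper does not prove Lemma \ref{lem0} at all, it imports the statement from \cite{LevMan2012} (the authors' earlier work on the core of a unicyclic graph) and uses it as a black box. Your argument therefore supplies a self-contained proof where the paper has only a citation. The argument itself is sound in all its steps: the translation $v\in\mathrm{core}(H)\Leftrightarrow\alpha(H-v)=\alpha(H)-1$ and the translation ``some maximum matching of $H$ misses $v$'' $\Leftrightarrow\mu(H-v)=\mu(H)$ are both elementary and you justify each in both directions (the key observation, which you state, is that a maximum independent set of $H$ avoiding $v$, resp.\ a maximum matching of $H$ missing $v$, is literally the same object as a maximum one of $H-v$ of the appropriate size); the bridge is then the K\"{o}nig--Egerv\'{a}ry identity applied to both $H$ and $H-v$, both bipartite since vertex deletion preserves bipartiteness, which turns $\alpha(H-v)=\alpha(H)-1$ into $\bigl(n(H)-1\bigr)-\mu(H-v)=\bigl(n(H)-\mu(H)\bigr)-1$, i.e., exactly $\mu(H-v)=\mu(H)$. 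This chain of equivalences gives both implications simultaneously. It is also very much in the spirit of how the present paper reasons elsewhere: Lemma \ref{lem2} and Lemma \ref{lem6} perform the same kind of numerical bookkeeping with the K\"{o}nig--Egerv\'{a}ry identity, there applied to the edge-deleted bipartite graph $G-e$ rather than to the vertex-deleted graph $H-v$.
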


Lemma \ref{lem0} fails for non-bipartite K\"{o}nig-Egerv\'{a}ry graphs; e.g.,
every maximum matching of the graph $G$ from Figure \ref{fig112} saturates
$c\in$ \textrm{core}$(G)=\{a,b,c\}$.

\begin{lemma}
\label{lem2}If $G$ is a almost bipartite graph, then $n(G)-1\leq\alpha
(G)+\mu(G)\leq n(G)$.
\end{lemma}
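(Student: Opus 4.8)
The plan is to exploit the fact that an almost bipartite graph sits a single edge away from a bipartite---hence K\"{o}nig-Egerv\'{a}ry---graph, and to control how the two parameters change under that one edge deletion. The upper bound $\alpha(G)+\mu(G)\le n(G)$ is the general inequality recalled earlier in the excerpt and needs no further argument, so all the real work goes into the lower bound $\alpha(G)+\mu(G)\ge n(G)-1$.

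First I would invoke Lemma~\ref{lem1} to fix an edge $e\in E(C)$ of the unique odd cycle with $\mu(G-e)=\mu(G)$. Next I would observe that, because $C$ is the \emph{only} odd cycle of $G$, deleting any one of its edges leaves a graph with no odd cycle at all, so $G-e$ is bipartite. Since every bipartite graph is K\"{o}nig-Egerv\'{a}ry, this yields
\[
\alpha(G-e)+\mu(G-e)=n(G-e)=n(G),
\]
the last equality holding because edge deletion does not alter the vertex set.

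It then remains to transfer this identity back to $G$. Here I would combine the two standard edge-deletion facts: $\mu(G-e)=\mu(G)$ by the very choice of $e$, and the inequality $\alpha(G-e)\le\alpha(G)+1$ recalled in the excerpt, which rewrites as $\alpha(G)\ge\alpha(G-e)-1$. Substituting both into the displayed identity gives
\[
\alpha(G)+\mu(G)\ge\bigl(\alpha(G-e)-1\bigr)+\mu(G-e)=n(G)-1,
\]
which is exactly the desired lower bound.

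I expect essentially no hard step once Lemma~\ref{lem1} is available; the argument is just a careful bookkeeping of the two well-understood edge-deletion inequalities. The single point that deserves explicit justification is that $G-e$ is genuinely bipartite, and I would argue this directly from the uniqueness of the odd cycle $C$: every odd closed walk contains an odd cycle, and since $C$ is the only odd cycle of $G$, removing an edge of $C$ leaves no odd cycle, hence a bipartite graph.
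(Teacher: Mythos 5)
Your proof is correct and follows essentially the same route as the paper: delete an edge of the unique odd cycle, observe that the resulting graph is bipartite and hence K\"{o}nig-Egerv\'{a}ry, and transfer the identity $\alpha+\mu=n$ back to $G$ via the edge-deletion bounds. The only difference is that you invoke Lemma~\ref{lem1} to secure $\mu(G-e)=\mu(G)$, which is unnecessary here --- the paper takes an arbitrary $e\in E(C)$ and uses the weaker fact $\mu(G-e)\le\mu(G)$, reserving Lemma~\ref{lem1} for the converse direction of Lemma~\ref{lem6}.
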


\begin{proof}
If $e=xy\in E(C)$, then $G-e$ is bipartite, and hence, $\alpha(G-e)+\mu
(G-e)=n(G)$. Clearly, $\alpha(G-e)\leq\alpha(G)+1$, while $\mu(G-e)\leq\mu
(G)$. Consequently, we get that
\[
n(G)=\alpha(G-e)+\mu(G-e)\leq\alpha(G)+\mu(G)+1,
\]
which leads to $n(G)-1\leq\alpha(G)+\mu(G)$. The inequality $\alpha
(G)+\mu(G)\leq n(G)$ is true for every graph $G$.
\end{proof}

\begin{lemma}
\label{lem6}Let $G$ be a almost bipartite graph. Then $n(G)-1=\alpha
(G)+\mu(G)$ if and only if each edge of its unique odd cycle is $\alpha$-critical.
\end{lemma}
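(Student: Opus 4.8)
The plan is to exploit the fact that deleting any edge $e\in E(C)$ of the unique odd cycle yields a bipartite graph $G-e$, which is therefore K\"{o}nig-Egerv\'{a}ry, so that $\alpha(G-e)+\mu(G-e)=n(G)$. Throughout I would keep in hand the two elementary bounds $\alpha(G-e)\le\alpha(G)+1$ (valid for every edge) and $\mu(G-e)\le\mu(G)$, which already appeared in the proof of Lemma \ref{lem2}.

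For the forward implication, I would assume $n(G)-1=\alpha(G)+\mu(G)$ and fix an \emph{arbitrary} edge $e\in E(C)$. Combining the K\"{o}nig-Egerv\'{a}ry identity for the bipartite graph $G-e$ with the hypothesis gives
\[
\alpha(G-e)+\mu(G-e)=n(G)=\alpha(G)+\mu(G)+1=\left(\alpha(G)+1\right)+\mu(G).
\]
Since the right-hand side is exactly the sum of the two upper bounds $\alpha(G)+1$ and $\mu(G)$, both inequalities must be tight simultaneously; in particular $\alpha(G-e)=\alpha(G)+1$, so $e$ is $\alpha$-critical. As $e$ was arbitrary, every edge of $C$ is $\alpha$-critical.

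For the reverse implication, I would assume every edge of $C$ is $\alpha$-critical and invoke Lemma \ref{lem1} to select a \emph{specific} edge $e_{0}\in E(C)$ with $\mu(G-e_{0})=\mu(G)$. Because $e_{0}$ is $\alpha$-critical we have $\alpha(G-e_{0})=\alpha(G)+1$, and because $G-e_{0}$ is bipartite,
\[
n(G)=\alpha(G-e_{0})+\mu(G-e_{0})=\left(\alpha(G)+1\right)+\mu(G),
\]
whence $\alpha(G)+\mu(G)=n(G)-1$. The argument carries no real computational obstacle; its two load-bearing observations are that $G-e$ is K\"{o}nig-Egerv\'{a}ry for every cycle edge $e$, and that Lemma \ref{lem1} furnishes the non-$\mu$-critical edge needed to convert $\alpha$-criticality into the desired equality. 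The only point demanding care is the tightness argument in the forward direction, where the hypothesis must be seen to pin down both bounds at once.
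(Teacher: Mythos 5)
Your proposal is correct and follows essentially the same route as the paper: the forward direction uses the K\"{o}nig-Egerv\'{a}ry identity for the bipartite graph $G-e$ together with the tightness of the two bounds $\alpha(G-e)\leq\alpha(G)+1$ and $\mu(G-e)\leq\mu(G)$, and the reverse direction combines Lemma \ref{lem1} with the $\alpha$-criticality of the chosen edge. The paper phrases the tightness step via the inequality chain $-1\leq\mu(G-e)-\mu(G)\leq 0\leq\alpha(G-e)-\alpha(G)\leq 1$, but this is the same argument you give.
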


\begin{proof}
Assume that $n(G)-1=\alpha(G)+\mu(G)$. For each $e\in E(C)$, $G-e$ is
bipartite, and then we have
\[
\alpha(G-e)-\alpha(G)+\mu(G-e)-\mu(G)=1,
\]
which implies $\mu(G-e)=\mu(G)$ and $\alpha(G-e)=\alpha(G)+1$, since%
\[
-1\leq\mu(G-e)-\mu(G)\leq0\leq\alpha(G-e)-\alpha(G)\leq1.
\]
In other words, every $e\in E(C)$ is $\alpha$-critical.

Conversely, let us choose $e\in E\left(  C\right)  $ satisfying $\mu
(G-e)=\mu(G)$. By Lemma \ref{lem1} such an edge exists. Since $e$ is $\alpha
$-critical, and $G-e$ is bipartite, we infer that
\[
n(G)-1=\alpha(G-e)+\mu(G-e)-1=\alpha(G)+\mu(G),
\]
and this completes the proof.
\end{proof}

\begin{lemma}
\label{lem5}Let $G$ be a almost bipartite graph. If there is some $x\in
N_{1}(C)$, such that $x\in\mathrm{core}(B_{x})$, then $G$ is a
K\"{o}nig-Egerv\'{a}ry graph.
\end{lemma}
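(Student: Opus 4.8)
The plan is to prove directly that $\alpha(G)+\mu(G)=n(G)$, which is the defining property of a K\"{o}nig-Egerv\'{a}ry graph. Fix an $x\in N_{1}(C)$ with $x\in\mathrm{core}(B_{x})$, let $y\in V(C)$ be its neighbor on the cycle, and split $G$ along the edge $xy$: set $G'=G-V(B_{x})$, so that $V(B_{x})$ and $V(G')$ partition $V(G)$ and $xy$ is the only edge joining the two parts (every other edge incident to a vertex of $B_{x}$ already lies inside the connected component $B_{x}$ of $G-xy$). Since $B_{x}$ is bipartite it contains no odd cycle, so the unique odd cycle $C$ survives intact in $G'$; hence $y\in V(C)\subseteq V(G')$ and, crucially, $G'-y$ is bipartite, because deleting $y$ destroys the only odd cycle of $G$.

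The two bipartite pieces supply the arithmetic. As bipartite graphs are K\"{o}nig-Egerv\'{a}ry, I would record $\alpha(B_{x})+\mu(B_{x})=n(B_{x})$ and $\alpha(G'-y)+\mu(G'-y)=n(G'-y)=n(G')-1$. The hypothesis $x\in\mathrm{core}(B_{x})$ then enters twice. First, every maximum independent set of $B_{x}$ contains $x$; taking such a set $S_{1}$ together with a maximum independent set $S_{2}$ of $G'-y$ gives an independent set $S_{1}\cup S_{2}$ of $G$ (the sole cross edge $xy$ is harmless since $y\notin S_{2}$), so $\alpha(G)\geq\alpha(B_{x})+\alpha(G'-y)$. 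Second, by Lemma \ref{lem0} some maximum matching $M_{1}$ of the bipartite graph $B_{x}$ misses $x$; adjoining the edge $xy$ and a maximum matching $M_{2}$ of $G'-y$ yields a matching $M_{1}\cup\{xy\}\cup M_{2}$ of $G$, whence $\mu(G)\geq\mu(B_{x})+1+\mu(G'-y)$.

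Adding these two lower bounds and substituting the two bipartite identities gives
\[
\alpha(G)+\mu(G)\geq\bigl(\alpha(B_{x})+\mu(B_{x})\bigr)+\bigl(\alpha(G'-y)+\mu(G'-y)\bigr)+1=n(B_{x})+\bigl(n(G')-1\bigr)+1=n(G).
\]
Since $\alpha(G)+\mu(G)\leq n(G)$ holds for every graph, equality follows and $G$ is K\"{o}nig-Egerv\'{a}ry. The main work is the structural setup rather than the counting: one must check that $xy$ is genuinely the unique edge linking $B_{x}$ and $G'$ and that $C\subseteq G'$, so that $G'-y$ is bipartite, and one must exploit both consequences of $x\in\mathrm{core}(B_{x})$ simultaneously — membership in every maximum independent set of $B_{x}$ for the independence bound, and avoidability by some maximum matching of $B_{x}$ for the matching bound. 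An alternative route would invoke Lemma \ref{lem6} by exhibiting one edge of $C$ that fails to be $\alpha$-critical, but the direct construction of a witnessing independent set and matching is cleaner and sidesteps any case analysis on the status of $y$ inside $G'$.
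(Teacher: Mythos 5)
Your proof is correct, and its engine --- a maximum matching of $B_{x}$ that misses $x$ (via Lemma \ref{lem0}), extended by the edge $xy$ and a maximum matching of the remaining bipartite part --- is exactly the matching the paper constructs; but the logical packaging is genuinely different, and yours is leaner. The paper argues by contradiction: assuming $G$ is not a K\"{o}nig-Egerv\'{a}ry graph, it invokes Lemma \ref{lem2} and Lemma \ref{lem6} to conclude that the cycle edge $yz$ is $\alpha$-critical, deduces $y\notin\mathrm{core}(G)$ and hence $\alpha(G-y)=\alpha(G)$, and then obtains the contradiction $n(G)-1=\alpha(G)+\mu(G)\geq\alpha(G-y)+\mu(G-y)+1=n(G)$ from the K\"{o}nig-Egerv\'{a}ry identity for the bipartite graph $G-y$. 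You instead prove $\alpha(G)+\mu(G)\geq n(G)$ outright: since $G-y$ is the disjoint union of $B_{x}$ and $G'-y$, the bound $\alpha(G)\geq\alpha(B_{x})+\alpha(G'-y)=\alpha(G-y)$ comes for free (no cross edge survives once $y$ is deleted), so the equality $\alpha(G)=\alpha(G-y)$ --- the only thing the paper needed $\alpha$-criticality, $\mathrm{core}(G)$, and Lemma \ref{lem6} for --- is never required, and the contradiction scaffolding collapses into a direct argument. What your route buys is a proof independent of Lemma \ref{lem6} and of any property of $\mathrm{core}(G)$; the paper's detour through criticality is, in this light, redundant, since only the trivial inequality $\alpha(G)\geq\alpha(G-y)$ is ever used. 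Two minor remarks: the hypothesis $x\in\mathrm{core}(B_{x})$ is superfluous where you first invoke it, because \emph{any} maximum independent sets of $B_{x}$ and of $G'-y$ union to an independent set of $G$ --- it is needed only for the matching bound via Lemma \ref{lem0}; and both proofs lean equally on the structural facts that $xy$ is the unique edge joining $V(B_{x})$ to the rest of $G$ and that $V(C)\cap V(B_{x})=\emptyset$, which the paper extracts from its standing observation that no other cycle of $G$ shares vertices with $C$.
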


\begin{proof}
Let $x\in\mathrm{core}(B_{x})$, $y\in N\left(  x\right)  \cap V(C)$, and $z\in
N\left(  y\right)  \cap V(C)$. Suppose, to the contrary, that $G$ is not a
K\"{o}nig-Egerv\'{a}ry graph. By\emph{ }Lemma \ref{lem2} and Lemma \ref{lem6},
the edge $yz$ is $\alpha$-critical. Since $y\notin\mathrm{core}(G)$, it
follows that $\alpha(G)=\alpha(G-y)$. By Lemma \ref{lem0} there exists a
maximum matching $M_{x}$ of $B_{x}$ not saturating $x$. Combining $M_{x}$ with
a maximum matching of $G-y-B_{x}$ we get a maximum matching $M_{y}$ of $G-y$.
Hence $M_{y}\cup\left\{  xy\right\}  $ is a matching of $G$, which results in
$\mu\left(  G\right)  \geq\mu\left(  G-y\right)  +1$. Consequently, using
Lemma \ref{lem6} and having in mind that $G-y$ is a bipartite graph of order
$n(G)-1$, we get the following contradiction
\[
n(G)-1=\alpha(G)+\mu\left(  G\right)  \geq\alpha(G-y)+\mu\left(  G-y\right)
+1=n(G)-1+1=n(G),
\]
that completes the proof.
\end{proof}

\begin{theorem}
\label{th22}If $G$ is a almost bipartite non-K\"{o}nig-Egerv\'{a}ry graph,
then $\Omega\left(  G\right)  |_{V\left(  B_{x}\right)  }=\Omega\left(
B_{x}\right)  $.
\end{theorem}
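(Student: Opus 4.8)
The plan is to exploit that, by construction, the single edge $xy$ is the only edge of $G$ joining $V(B_{x})$ to $R:=V(G)-V(B_{x})$: since $B_{x}$ is a connected component of $G-xy$, we have $B_{x}=G[V(B_{x})]$, and $G$ is exactly $B_{x}$ and $G[R]$ glued along the bridge $xy$ (with $x\in V(B_{x})$, $y\in R$). Moreover $G[R]$ is again almost bipartite, its unique odd cycle being $C$. The first step is to pin down $\alpha(G)$. Because $G$ is not a K\"{o}nig-Egerv\'{a}ry graph, the contrapositive of Lemma \ref{lem5} gives $x\notin\mathrm{core}(B_{x})$; hence some maximum independent set of $B_{x}$ avoids $x$, and pairing it with any maximum independent set of $G[R]$ yields $\alpha(G)=\alpha(B_{x})+\alpha(G[R])$. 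This additivity settles one inclusion at once: for $S\in\Omega(G)$ the parts $S\cap V(B_{x})$ and $S\cap R$ must each be maximum, so $S\cap V(B_{x})\in\Omega(B_{x})$ and therefore $\Omega(G)|_{V(B_{x})}\subseteq\Omega(B_{x})$.

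For the reverse inclusion I would take an arbitrary $T\in\Omega(B_{x})$ and extend it across the bridge to a maximum independent set of $G$. If $x\notin T$, the bridge imposes no constraint and $T$ together with any maximum independent set of $G[R]$ works. The delicate case is $x\in T$: now I need a maximum independent set of $G[R]$ that avoids $y$, that is, I must prove $y\notin\mathrm{core}(G[R])$.

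Proving $y\notin\mathrm{core}(G[R])$ is the main obstacle, and my route is to transfer the non-K\"{o}nig-Egerv\'{a}ry property to $G[R]$. From $x\notin\mathrm{core}(B_{x})$ and Lemma \ref{lem0}, every maximum matching of $B_{x}$ saturates $x$, so $\mu(B_{x}-x)=\mu(B_{x})-1$; a short analysis of the bridge $xy$ then forces $\mu(G)=\mu(B_{x})+\mu(G[R])$. Combining this with $\alpha(G)=\alpha(B_{x})+\alpha(G[R])$, with $\alpha(B_{x})+\mu(B_{x})=n(B_{x})$ (valid since $B_{x}$ is bipartite), and with $\alpha(G)+\mu(G)=n(G)-1$ (Lemma \ref{lem2} together with $G$ being non-K\"{o}nig-Egerv\'{a}ry), I get $\alpha(G[R])+\mu(G[R])=n(G[R])-1$. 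Hence $G[R]$ is an almost bipartite non-K\"{o}nig-Egerv\'{a}ry graph, so by Lemma \ref{lem6} every edge of $C$ is $\alpha$-critical in $G[R]$, and then Theorem \ref{prop1}(i) applied to $G[R]$ places all of $V(C)$, in particular $y$, outside $N[\mathrm{core}(G[R])]$; thus $y\notin\mathrm{core}(G[R])$. This yields the required maximum independent set of $G[R]$ avoiding $y$, finishing the case $x\in T$ and hence the inclusion $\Omega(B_{x})\subseteq\Omega(G)|_{V(B_{x})}$. The only places requiring real care are the identification $B_{x}=G[V(B_{x})]$ with $xy$ as the unique crossing edge, and the bridge computation of $\mu(G)$; the rest is bookkeeping.
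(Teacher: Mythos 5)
Your proof is correct, and it takes a genuinely different route from the paper's. The paper works entirely inside $G$: by Lemma \ref{lem6} the edge $yz\in E(C)$ is $\alpha$-critical, which supplies $S_{y}\in\Omega(G)$ with $y\in S_{y}$ and $S_{yz}\in\Omega(G-yz)$ with $y,z\in S_{yz}$, and each $A\in\Omega(B_{x})$ is then spliced with $S_{y}-V(B_{x})$ (when $x\notin A$) or with $S_{yz}-\left\{  y\right\}  -V(B_{x})$ (when $x\in A$); the reverse inclusion is essentially the same exchange argument you give, based on $x\notin\mathrm{core}(B_{x})$ from Lemma \ref{lem5}. You instead decompose $G$ along the bridge and prove additivity of both parameters, $\alpha(G)=\alpha(B_{x})+\alpha(G[R])$ and $\mu(G)=\mu(B_{x})+\mu(G[R])$ where $R=V(G)-V(B_{x})$, the latter via Lemma \ref{lem0}; combined with $\alpha(B_{x})+\mu(B_{x})=n(B_{x})$ and $\alpha(G)+\mu(G)=n(G)-1$, this shows that $G[R]$ is itself an almost bipartite non-K\"{o}nig-Egerv\'{a}ry graph, so Lemma \ref{lem6} and Theorem \ref{prop1}\emph{(i)} applied to $G[R]$ give $y\notin\mathrm{core}(G[R])$, which settles the delicate case $x\in T$. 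This transfer of the non-K\"{o}nig-Egerv\'{a}ry property across the bridge is the genuinely new ingredient: the paper never treats $G[R]$ as a graph in its own right and never needs matchings in this proof, whereas you trade the paper's ad hoc splicing (the sets $S_{y}$, $S_{yz}$) for two clean additivity formulas plus some matching theory (Lemma \ref{lem0} and the K\"{o}nig-Egerv\'{a}ry identity for the bipartite graph $B_{x}$); your version also makes explicit the recursive structure that the paper leaves implicit. Both arguments rest on the same structural fact, that $xy$ is the unique edge joining $V(B_{x})$ to the rest of $G$; you flag this as a point requiring real care, but it is exactly what the paper's standing assumption (no cycle other than $C$ shares vertices with $C$) provides, and the paper's own splicing steps use it in precisely the same way, so you may simply cite the setup for it rather than prove it.
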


\begin{proof}
First, one has to prove that every maximum independent set of $B_{x}$ may be
enlarged to some maximum independent set of $G$.

Let $A\in\Omega(B_{x})$, $y\in N\left(  x\right)  \cap V(C)$, and $z\in
N\left(  y\right)  \cap V(C)$. According to Lemma \ref{lem6}, the edge $yz$ is
$\alpha$-critical. Hence, there exist $S_{y}\in\Omega(G)$, $S_{yz}\in$
$\Omega(G-yz)$, such that $y\in S_{y}$ and $y,z\in S_{yz}$.

\textit{Case 1}. Assume that $x\notin A$.

If $\left\vert S_{y}-V(B_{x})\right\vert <\alpha(G-B_{x})$ and $S_{0}\in
\Omega\left(  G-B_{x}\right)  $, then $S_{0}\cup\left(  S_{y}\cap
V(B_{x})\right)  $ is independent in $G$ that causes the contradiction
\[
\alpha\left(  G\right)  =\left\vert S_{y}-V(B_{x})\right\vert +\left\vert
S_{y}\cap V(B_{x})\right\vert <\left\vert S_{0}\right\vert +\left\vert
S_{y}\cap V(B_{x})\right\vert =\left\vert S_{0}\cup\left(  S_{y}\cap
V(B_{x})\right)  \right\vert .
\]
Therefore, we have $\left\vert S_{y}-V(B_{x})\right\vert =\alpha(G-B_{x})$.
Then $A\cup\left(  S_{y}-V\left(  B_{x}\right)  \right)  \in\Omega(G)$,
otherwise we get the following contradiction
\[
\left\vert S_{y}-V(B_{x})\right\vert +\left\vert A\right\vert <\alpha
(G)\leq\alpha(G-B_{x})+\alpha(B_{x})=\left\vert S_{y}-V(B_{x})\right\vert
+\left\vert A\right\vert .
\]

\textit{Case 2}. Assume now that $x\in A$.

Then $\left\vert A\right\vert \geq\left\vert S_{yz}\cap V\left(  B_{x}\right)
\right\vert $. Hence
\[
\alpha\left(  G\right)  =\left\vert S_{yz}-\left\{  y\right\}  \right\vert
\leq\left\vert \left(  S_{yz}-\left\{  y\right\}  -\left(  S_{yz}\cap V\left(
B_{x}\right)  \right)  \right)  \cup A\right\vert =\left\vert \left(
S_{yz}-\left\{  y\right\}  -V\left(  B_{x}\right)  \right)  \cup A\right\vert
.
\]
Since the set $\left(  S_{yz}-\left\{  y\right\}  -V\left(  B_{x}\right)
\right)  \cup A$ is independent and its size is $\alpha\left(  G\right)  $ at
least, it is also maximum independent, i.e., $\left(  S_{yz}-\left\{
y\right\}  -V\left(  B_{x}\right)  \right)  \cup A\in\Omega(G)$.

Second, it is left to prove that $S\cap V\left(  B_{x}\right)  \in
\Omega\left(  B_{x}\right)  $ for every $S\in\Omega\left(  G\right)  $. Let
$S\in\Omega\left(  G\right)  $, and suppose, to the contrary, that $A=S\cap
V\left(  B_{x}\right)  \notin\Omega\left(  B_{x}\right)  $. Since, by Lemma
\ref{lem5}, we have $x\notin\mathrm{core}(B_{x})$, we can change $A$ for some
$B\in\Omega\left(  B_{x}\right)  $ not containing $x$. The set $\left(
S-A\right)  \cup B$ is independent, and $\left\vert \left(  S-A\right)  \cup
B\right\vert =\left\vert S-A\right\vert +\left\vert B\right\vert >\left\vert
S\right\vert =\alpha(G)$. This contradiction completes the proof.
\end{proof}

\begin{corollary}
\label{cor1}If $G$ is a connected almost bipartite non-K\"{o}nig-Egerv\'{a}ry
graph, then

\emph{(i)} $\mathrm{core}\left(  G\right)  =\bigcup\left\{  \mathrm{core}%
\left(  B_{x}\right)  :x\in N(V(C))-V(C)\right\}  $;

\emph{(ii)} \textrm{core}$(G)\cap N\left[  V\left(  C\right)  \right]
=\emptyset$.
\end{corollary}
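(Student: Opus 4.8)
The plan is to establish (ii) first, since it feeds directly into (i). Because $G$ is an almost bipartite non-K\"{o}nig-Egerv\'{a}ry graph, Lemma \ref{lem2} forces $\alpha(G)+\mu(G)=n(G)-1$, so Lemma \ref{lem6} guarantees that \emph{every} edge of the unique odd cycle $C$ is $\alpha$-critical; this is the fact I exploit throughout.

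For (ii) I would invoke Theorem \ref{prop1}(i): no $\alpha$-critical edge has an endpoint in $N[\mathrm{core}(G)]$. Since every edge of $C$ is $\alpha$-critical and has both endpoints in $V(C)$, this gives $V(C)\cap N[\mathrm{core}(G)]=\emptyset$ at once. The closing step is the remark that the relation ``$X\cap N[Y]=\emptyset$'' is symmetric in $X$ and $Y$ --- it merely asserts that $X\cap Y=\emptyset$ and that no edge joins $X$ to $Y$ --- so it is equivalent to $\mathrm{core}(G)\cap N[V(C)]=\emptyset$, which is exactly (ii). In particular $\mathrm{core}(G)\cap V(C)=\emptyset$ and no core vertex is adjacent to $C$, hence $\mathrm{core}(G)\cap N_{1}(C)=\emptyset$ as well.

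For (i) the engine is Theorem \ref{th22}, namely $\Omega(G)|_{V(B_{x})}=\Omega(B_{x})$ for each $x\in N_{1}(C)$. I would first prove the pointwise identity $\mathrm{core}(G)\cap V(B_{x})=\mathrm{core}(B_{x})$ by two inclusions. For ``$\supseteq$'': if $v\in\mathrm{core}(B_{x})$ then, since $S\cap V(B_{x})\in\Omega(B_{x})$ for every $S\in\Omega(G)$, we get $v\in S$ for all such $S$, so $v\in\mathrm{core}(G)$. For ``$\subseteq$'': given $v\in\mathrm{core}(G)\cap V(B_{x})$ and an arbitrary $A\in\Omega(B_{x})$, the surjectivity built into Theorem \ref{th22} yields some $S\in\Omega(G)$ with $S\cap V(B_{x})=A$; as $v\in\mathrm{core}(G)\subseteq S$ and $v\in V(B_{x})$ we obtain $v\in A$, whence $v\in\mathrm{core}(B_{x})$. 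Finally, part (ii) gives $\mathrm{core}(G)\cap V(C)=\emptyset$, and since $G$ is connected every vertex outside $C$ lies in some $B_{x}$; combining these, $\mathrm{core}(G)=\bigcup_{x}\left(\mathrm{core}(G)\cap V(B_{x})\right)=\bigcup_{x}\mathrm{core}(B_{x})$, which is (i).

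The corollary itself carries little difficulty once Theorem \ref{th22} is in hand; the main care goes into the two bookkeeping points --- confirming that the core avoids the entire closed neighborhood $N[V(C)]$ (which rests on the ``all edges of $C$ are $\alpha$-critical'' consequence of Lemmas \ref{lem2} and \ref{lem6} together with Theorem \ref{prop1}(i)), and confirming that the bipartite pieces $B_{x}$ together cover $V(G)-V(C)$, so that the union over $x$ truly recovers all of $\mathrm{core}(G)$.
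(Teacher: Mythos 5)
Your proof is correct and rests on the same two pillars as the paper's: Theorem \ref{th22} for part \emph{(i)}, and the $\alpha$-criticality of every edge of $C$ (via Lemmas \ref{lem2} and \ref{lem6}) for part \emph{(ii)}. The differences are minor but worth recording. For \emph{(ii)}, the paper argues directly: for $ab\in E(C)$, $\alpha$-criticality yields $S_{a},S_{b}\in\Omega(G)$ with $a\in S_{a}$ and $b\in S_{b}$, whence $a\notin\mathrm{core}(G)$ and $N(a)\cap\mathrm{core}(G)=\emptyset$; you instead invoke the cited Theorem \ref{prop1}\emph{(i)} together with the (correct) observation that the relation $X\cap N[Y]=\emptyset$ is symmetric in $X$ and $Y$ --- essentially the same shortcut the paper itself uses later, in the proof of Theorem \ref{th4}\emph{(ii)}. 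For \emph{(i)}, your double inclusion is just an unpacking of the paper's one-line identity $\mathrm{core}(B_{x})=\bigcap\{S\cap V(B_{x}):S\in\Omega(G)\}=\mathrm{core}(G)\cap V(B_{x})$. One point in your favor: by proving \emph{(ii)} first, you explicitly supply the two facts needed to pass from this pointwise identity to the union formula --- that $\mathrm{core}(G)\cap V(C)=\emptyset$ and that the subgraphs $B_{x}$ cover $V(G)-V(C)$ --- whereas the paper compresses this step into ``which clearly implies,'' leaving the dependence on \emph{(ii)} (or an equivalent fact) implicit.
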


\begin{proof}
\emph{(i) }By Theorem \ref{th22}, we infer that:%
\begin{align*}
\mathrm{core}\left(  B_{x}\right)   &  =\bigcap\{A:A\in\Omega(B_{x}%
)\}=\bigcap\{S\cap V\left(  B_{x}\right)  :S\in\Omega(G)\}\\
&  =(\bigcap\{S:S\in\Omega(G)\})\cap V\left(  B_{x}\right)  =\mathrm{core}%
\left(  G\right)  \cap V\left(  B_{x}\right)  ,
\end{align*}
which clearly implies $\mathrm{core}\left(  G\right)  =\bigcup\left\{
\mathrm{core}\left(  B_{x}\right)  :x\in N(V(C))-V(C)\right\}  $.

\emph{(ii) }Let\emph{ }$ab\in E\left(  C\right)  $. By Lemma \ref{lem6}, the
edge $ab$ is $\alpha$-critical. Hence there exist $S_{a},S_{b}\in\Omega\left(
G\right)  $, such that $a\in S_{a}$ and $b\in S_{b}$. Since $a\notin S_{b}$,
it follows that $a\notin$ \textrm{core}$(G)$, and because $a\in S_{a}$, we
infer that $N\left(  a\right)  \cap$ \textrm{core}$(G)=\emptyset$.
Consequently, we obtain that \textrm{core}$(G)\cap N\left[  V\left(  C\right)
\right]  =\emptyset$.
\end{proof}

\begin{figure}[h]
\setlength{\unitlength}{1cm}\begin{picture}(5,1)\thicklines
\multiput(2,0)(1,0){6}{\circle*{0.29}}
\multiput(3,1)(1,0){3}{\circle*{0.29}}
\put(2,0){\line(1,0){5}}
\put(3,0){\line(0,1){1}}
\put(4,0){\line(0,1){1}}
\put(4,1){\line(1,0){1}}
\put(5,1){\line(1,-1){1}}
\put(2,0.3){\makebox(0,0){$a$}}
\put(3.3,1){\makebox(0,0){$b$}}
\put(7,0.3){\makebox(0,0){$c$}}
\put(1.2,0.5){\makebox(0,0){$G_{1}$}}
\multiput(9,0)(1,0){5}{\circle*{0.29}}
\multiput(11,1)(1,0){2}{\circle*{0.29}}
\put(9,0){\line(1,0){4}}
\put(10,0){\line(1,1){1}}
\put(11,1){\line(1,0){1}}
\put(12,0){\line(0,1){1}}
\put(9,0.3){\makebox(0,0){$x$}}
\put(11,0.3){\makebox(0,0){$y$}}
\put(13,0.3){\makebox(0,0){$z$}}
\put(8.2,0.5){\makebox(0,0){$G_{2}$}}
\end{picture}\caption{$G_{1},G_{2}$ are K\"{o}nig-Egerv\'{a}ry graphs,
\textrm{core}$(G_{1})=\left\{  a,b,c\right\}  $, \textrm{core}$(G_{2}%
)=\left\{  x,y,z\right\}  $.}%
\label{fig11222}%
\end{figure}
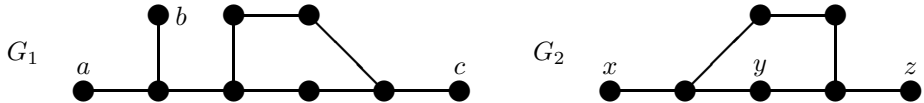

The assertion in Corollary \ref{cor1}\emph{(i)} may fail for connected
unicyclic K\"{o}nig-Egerv\'{a}ry graphs; for instance, $\mathrm{core}\left(
G_{2}\right)  \neq\left\{  x,z\right\}  =\bigcup\left\{  \mathrm{core}\left(
B_{x}\right)  :x\in N(V(C))-V(C)\right\}  $, while $\mathrm{core}\left(
G_{1}\right)  =\bigcup\left\{  \mathrm{core}\left(  B_{x}\right)  :x\in
N(V(C))-V(C)\right\}  $, where $G_{1}$ and $G_{2}$ are from Figure
\ref{fig11222}.

\begin{proposition}
\label{prop4}Let $G$ be a\emph{ }almost bipartite. Then the following
assertions are equivalent:

\emph{(i)} $x\notin\mathrm{core}(B_{x})$, for every $x\in N_{1}(C)$;

\emph{(ii)} there exists some $S\in\Omega(G)$, such that $S\cap N_{1}%
(C)=\emptyset$;

\emph{(iii)} $n(G)-1=\alpha(G)+\mu(G)$, i.e., $G$ is not a
K\"{o}nig-Egerv\'{a}ry graph.
\end{proposition}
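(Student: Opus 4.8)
The plan is to prove the three-way equivalence by establishing a cycle of implications. I would show (iii) $\Rightarrow$ (i) $\Rightarrow$ (ii) $\Rightarrow$ (iii), since each individual step seems to follow fairly directly from the machinery already developed. Let me sketch each arrow.

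First, for (iii) $\Rightarrow$ (i): this is essentially the contrapositive of Lemma \ref{lem5}. If $G$ is not a K\"{o}nig-Egerv\'{a}ry graph and we had some $x\in N_{1}(C)$ with $x\in\mathrm{core}(B_{x})$, then Lemma \ref{lem5} would force $G$ to be K\"{o}nig-Egerv\'{a}ry, a contradiction. So assuming (iii), every $x\in N_{1}(C)$ satisfies $x\notin\mathrm{core}(B_{x})$, which is exactly (i). Note the identification of (iii) with ``$G$ is not K\"{o}nig-Egerv\'{a}ry'' is already justified by Lemma \ref{lem2} together with the definition: the order-minus-one value is the only alternative to the order value.

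Second, for (i) $\Rightarrow$ (ii): the idea is to build a single maximum independent set of $G$ avoiding all of $N_1(C)$. Assuming (i), for each $x\in N_1(C)$ we have $x\notin\mathrm{core}(B_x)$, so by Lemma \ref{lem0} (applied to the bipartite graph $B_x$) there is a maximum independent set $A_x\in\Omega(B_x)$ with $x\notin A_x$. The natural move is to invoke Theorem \ref{th22}, which says $\Omega(G)|_{V(B_x)}=\Omega(B_x)$, to glue these local choices into a global one. The main obstacle I anticipate lives here: I must simultaneously avoid \emph{all} of $N_1(C)$ with a single $S\in\Omega(G)$, and the $B_x$'s for different $x$ may overlap or interact through the cycle $C$, so I cannot naively take a union of independent sets. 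I would handle this by reasoning on the structure: the bipartite part $G-E(C)$ (or the forest obtained by deleting one cycle edge), selecting a maximum independent set of the bipartite graph $G-xy$ that restricts correctly on each block and puts none of $N_1(C)$ in the set, then using Theorem \ref{th22} and the $\alpha$-criticality of cycle edges to lift it to $\Omega(G)$. The fact that each cycle edge is $\alpha$-critical (Lemma \ref{lem6}, equivalent to (iii), which I may need to cycle back to) guarantees the independence number is not lost in the gluing.

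Third, for (ii) $\Rightarrow$ (iii): suppose some $S\in\Omega(G)$ satisfies $S\cap N_1(C)=\emptyset$. I want to show $\alpha(G)+\mu(G)=n(G)-1$, equivalently that $G$ is not K\"{o}nig-Egerv\'{a}ry. Since $S$ misses $N_1(C)$ entirely, $S\cap V(C)$ is an independent set in the odd cycle $C$, and on an odd cycle $C_{2k+1}$ an independent set has at most $k$ vertices, leaving at least $k+1$ vertices of $C$ outside $S$. The key counting step is to compare $|S|=\alpha(G)$ against $n(G)-\mu(G)$ by exhibiting a matching of size $n(G)-\alpha(G)$ is impossible to improve, or more directly to show the cycle forces a deficiency of exactly one. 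I expect to argue that $V(C)\setminus S$ together with the neighbors of $S$ in the bipartite remainder can be covered by a matching of size $n(G)-\alpha(G)-1$ but no larger, using the odd length of $C$ to produce the off-by-one. I would likely close the loop by invoking Lemma \ref{lem6} in reverse: showing that the existence of such an $S$ makes every cycle edge $\alpha$-critical, hence $n(G)-1=\alpha(G)+\mu(G)$. The hardest part overall is the gluing in (i) $\Rightarrow$ (ii), where local ``avoidance'' of each $x$ must be reconciled into one global independent set despite the shared cycle.
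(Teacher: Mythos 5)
Your cycle of implications is the same as the paper's, and your step \emph{(iii)} $\Rightarrow$ \emph{(i)} (contrapositive of Lemma \ref{lem5}) is exactly right. The genuine gap is in \emph{(i)} $\Rightarrow$ \emph{(ii)}: the tools you propose for the gluing, Theorem \ref{th22} and the $\alpha$-criticality of cycle edges via Lemma \ref{lem6}, are both available only under the hypothesis that $G$ is not a K\"{o}nig-Egerv\'{a}ry graph --- that is, under assertion \emph{(iii)}, the very statement your chain is supposed to reach two arrows later. You flag this yourself (``which I may need to cycle back to''), but it is fatal as it stands: if \emph{(i)} $\Rightarrow$ \emph{(ii)} is proved only under the extra assumption \emph{(iii)}, then your three arrows establish \emph{(ii)} $\Leftrightarrow$ \emph{(iii)} and \emph{(iii)} $\Rightarrow$ \emph{(i)}, but \emph{(i)} is never used as a hypothesis on its own, so the three-way equivalence is not proved. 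Moreover, the obstacle that pushed you toward this machinery is illusory: since no cycle of $G$ shares a vertex with $C$, the subgraphs $B_{x}$ for distinct $x\in N_{1}(C)$ are pairwise vertex-disjoint, each $B_{x}$ is joined to the rest of $G$ by the single edge $xy$ only, and $x$ is the only vertex of $N_{1}(C)$ lying in $V(B_{x})$. Hence the paper's elementary swap needs no global theorem: take any $S_{1}\in\Omega(G)$; if $x\in S_{1}$, pick $S_{x}\in\Omega(B_{x})$ with $x\notin S_{x}$ --- this exists by the definition of $\mathrm{core}(B_{x})$ as the intersection of all maximum independent sets, not by Lemma \ref{lem0}, which you misquote (that lemma concerns maximum matchings) --- and replace $S_{1}\cap V(B_{x})$ by $S_{x}$. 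The result is independent because the unique edge leaving $V(B_{x})$ is $xy$ and $x\notin S_{x}$, and it has size at least $\left\vert S_{1}\right\vert$, hence lies in $\Omega(G)$; iterating over the disjoint $B_{x}$'s yields $S$ with $S\cap N_{1}(C)=\emptyset$.

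Your \emph{(ii)} $\Rightarrow$ \emph{(iii)} also ends at the right place (all cycle edges $\alpha$-critical, then Lemma \ref{lem6}) but omits the one construction that makes it work, and the matching-counting alternative you gesture at (``covered by a matching of size $n(G)-\alpha(G)-1$ but no larger'') is not an argument. The paper's step is short: since $S\cap N_{1}(C)=\emptyset$, maximality of $S$ forces $\left\vert S\cap V(C)\right\vert =k$ where $C=C_{2k+1}$; for any $ab\in E(C)$ choose $S_{ab}\in\Omega(C-ab)$ with $a,b\in S_{ab}$ and $\left\vert S_{ab}\right\vert =k+1$; then $(S-V(C))\cup S_{ab}$ is independent in $G-ab$ --- precisely because $S$ avoids $N_{1}(C)$, so no edge joins $S-V(C)$ to $V(C)$ --- and has size $\alpha(G)+1$. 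Thus every edge of $C$ is $\alpha$-critical in $G$, and Lemma \ref{lem6} gives $n(G)-1=\alpha(G)+\mu(G)$. If you repair \emph{(i)} $\Rightarrow$ \emph{(ii)} with the swap above and insert this construction into \emph{(ii)} $\Rightarrow$ \emph{(iii)}, your outline becomes the paper's proof.
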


\begin{proof}
\emph{(i) }$\Rightarrow$ \emph{(ii) }Let $x\in N_{1}(C)$ and assume that there
is $S_{1}\in\Omega(G)$, such that $x\in S_{1}$. Since $x\notin\mathrm{core}%
(B_{x})$, there exists some $S_{x}\in\Omega(B_{x})$, such that $x\notin S_{x}%
$. Hence we infer that $\left\vert S_{1}\cap V(B_{x})\right\vert \leq
\alpha\left(  B_{x}\right)  =\left\vert S_{x}\right\vert $, $\left(
S_{1}-\left(  S_{1}\cap V(B_{x})\right)  \right)  \cup S_{x}$ is independent
in $G$, and then
\begin{gather*}
\left\vert \left(  S_{1}-\left(  S_{1}\cap V(B_{x})\right)  \right)  \cup
S_{x}\right\vert =\left\vert S_{1}-\left(  S_{1}\cap V(B_{x})\right)
\right\vert +\left\vert S_{x}\right\vert \\
=\left\vert S_{1}\right\vert -\left\vert S_{1}\cap V(B_{x})\right\vert
+\alpha\left(  B_{x}\right)  \geq\alpha\left(  G\right)  .
\end{gather*}
Therefore $S_{2}=\left(  S_{1}-\left(  S_{1}\cap V(B_{x})\right)  \right)
\cup S_{x}\in\Omega(G)$, and $x\notin S_{2}$.

In this way, adding more vertices belonging to $N_{1}(C)$,\ one can build some
$S\in\Omega(G)$, such that $S\cap N_{1}(C)=\emptyset$.

\emph{(ii) }$\Rightarrow$ \emph{(iii) }We have that $\left\vert S\cap
V(C)\right\vert =\left\lfloor \left\vert V(C)\right\vert /2\right\rfloor $,
because $S\cap N_{1}(C)=\emptyset$.

Let $ab\in E(C)$. Since $C$ is a chordless odd cycle, say $C=C_{2k+1},k\geq1$,
the edge $ab$ is $\alpha$-critical in $C$, i.e., there is $S_{ab}\in
\Omega\left(  C-ab\right)  $, such that $a,b\in S_{ab}$ and $\left\vert
S_{ab}\right\vert =k+1$.

Then, $W_{a}=(S-V(C))\cup S_{ab}$ is an independent set in $G-ab$, with
\begin{gather*}
\left\vert W_{a}\right\vert =\left\vert S-V(C)\right\vert +\left\vert
S_{ab}\right\vert \\
=\left\vert S\right\vert -\left\lfloor \left\vert V(C)\right\vert
/2\right\rfloor +\left\lfloor \left\vert V(C)\right\vert /2\right\rfloor
+1=1+\alpha\left(  G\right)  ,
\end{gather*}
which implies that the edge $ab$ is $\alpha$-critical in $G$. Since $ab$ was
an arbitrary edge on $C$, it follows that every edge of $C$ is $\alpha
$-critical in $G$. By Lemma\emph{ }\ref{lem6}, it follows that $n(G)-1=\alpha
(G)+\mu(G)$.

\emph{(iii) }$\Rightarrow$ \emph{(i) }It follows by Lemma \ref{lem5}.
\end{proof}

Combining Lemma \ref{lem5} and Proposition \ref{prop4}, we get the following.

\begin{corollary}
A almost bipartite graph is a K\"{o}nig-Egerv\'{a}ry graph if and only if
there is some $B_{x}$ such that $x\in\mathrm{core}(B_{x})$.
\end{corollary}

\begin{theorem}
\label{th4}Let $G$ be a connected almost bipartite graph. Then the following
assertions are true:

\emph{(i)} $\mu(G)\leq\alpha(G)$;

\emph{(ii)} there exists a matching from $N(\mathrm{core}(G))$ into
$\mathrm{core}(G)$;

\emph{(iii)} there is a maximum matching of $G$ that matches $N(\mathrm{core}%
(G))$ into $\mathrm{core}(G)$.
\end{theorem}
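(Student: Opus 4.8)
The plan is to split on whether $G$ is a K\"{o}nig-Egerv\'{a}ry graph, since almost everything is already in place for that case. If $\alpha(G)+\mu(G)=n(G)$, then (i) is immediate: $\mu(G)\le\lfloor n(G)/2\rfloor$ holds for every graph, so $\alpha(G)=n(G)-\mu(G)\ge n(G)/2\ge\mu(G)$; and both (ii) and (iii) follow at once from Theorem \ref{th1}(i), which asserts that \emph{every} maximum matching matches $N(\mathrm{core}(G))$ into $\mathrm{core}(G)$. Hence all the work is in the non-K\"{o}nig-Egerv\'{a}ry case, where by Lemma \ref{lem2} and Lemma \ref{lem6} we have $\alpha(G)+\mu(G)=n(G)-1$ and every edge of the unique odd cycle $C=C_{2k+1}$ is $\alpha$-critical.

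For (i) in this case, I would first note that $\mu(G)\le\alpha(G)$ is equivalent to $2\mu(G)\le n(G)-1$, which is automatic when $n(G)$ is odd (since then $\mu(G)\le\lfloor n(G)/2\rfloor=(n(G)-1)/2$); so it suffices to rule out a perfect matching. Suppose $M$ is one. Because $|V(C)|=2k+1$ is odd, $M$ cannot saturate $V(C)$ using only cycle edges, so some $v\in V(C)$ is matched by $M$ to a vertex $w\in N_{1}(C)$. Now $B_{w}$ meets the rest of $G$ only through the edge $vw$ (forced by the hypothesis that no other cycle shares a vertex with $C$), so every vertex of $B_{w}-w$ is matched by $M$ inside $B_{w}$; thus $M\cap E(B_{w})$ saturates all vertices of $B_{w}$ except $w$ and is therefore a maximum matching of the bipartite graph $B_{w}$. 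By Lemma \ref{lem0} this gives $w\in\mathrm{core}(B_{w})$, and then Lemma \ref{lem5} forces $G$ to be K\"{o}nig-Egerv\'{a}ry, a contradiction.

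For (ii) and (iii) I would decompose $G$ along its branches. By Corollary \ref{cor1}, $\mathrm{core}(G)=\bigcup_{x}\mathrm{core}(B_{x})$ with $\mathrm{core}(B_{x})=\mathrm{core}(G)\cap V(B_{x})$, and $\mathrm{core}(G)\cap N[V(C)]=\emptyset$. Since the branches are pairwise disjoint and attach to $C$ through single edges, no core vertex is the attaching vertex of its branch (Lemma \ref{lem5}, Proposition \ref{prop4}), so $N_{G}(\mathrm{core}(G))=\bigcup_{x}N_{B_{x}}(\mathrm{core}(B_{x}))$ lies entirely inside the branches and avoids $V(C)$. Each $B_{x}$ is bipartite, hence K\"{o}nig-Egerv\'{a}ry, so Theorem \ref{th1}(i) lets me choose a maximum matching $M_{x}$ of $B_{x}$ matching $N_{B_{x}}(\mathrm{core}(B_{x}))$ into $\mathrm{core}(B_{x})$; taking $\bigcup_{x}M_{x}$ already proves (ii). For (iii) I would adjoin a maximum matching $M_{C}$ of $C$ (of size $k$) to form $M^{\ast}=M_{C}\cup\bigcup_{x}M_{x}$. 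Because $x\notin\mathrm{core}(B_{x})$ forces every maximum matching of $B_{x}$ to saturate $x$ (Lemma \ref{lem0}), no attaching edge can enlarge $M^{\ast}$, so $\mu(G)=k+\sum_{x}\mu(B_{x})$ and $M^{\ast}$ is maximum; as $M_{C}$ touches only $V(C)$, which is disjoint from $N_{G}(\mathrm{core}(G))$, the matching $M^{\ast}$ still sends $N(\mathrm{core}(G))$ into $\mathrm{core}(G)$.

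The main obstacle I anticipate is the bookkeeping that makes this decomposition rigorous: verifying that the components of $G-V(C)$ are exactly the bipartite branches $B_{x}$, each joined to $C$ by a single edge (resting on the ``no other cycle meets $C$'' hypothesis), and then confirming that the assembled $M^{\ast}$ is genuinely maximum. The cleanest way to secure maximality is again the fact that $x\notin\mathrm{core}(B_{x})$ forces $x$ to be saturated by every maximum matching of $B_{x}$, which simultaneously blocks any augmentation through an attaching edge and yields $\mu(G)=k+\sum_{x}\mu(B_{x})$; I would cross-check this count against $\alpha(G)+\mu(G)=n(G)-1$ as a sanity test.
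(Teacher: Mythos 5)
Your handling of the K\"{o}nig-Egerv\'{a}ry case and of part \emph{(ii)} coincides with the paper's: Theorem \ref{th1}\emph{(i)} disposes of the former, and for \emph{(ii)} both you and the paper decompose $\mathrm{core}(G)=\bigcup_{x}\mathrm{core}(B_{x})$ via Corollary \ref{cor1}, pick a matching from $N(\mathrm{core}(B_{x}))$ into $\mathrm{core}(B_{x})$ inside each bipartite branch, and take the union. Your part \emph{(i)}, however, is a genuinely different argument: the paper deletes an edge $xy\in E(C)$, notes $x,y\in\mathrm{core}(G-xy)$, and invokes Theorem \ref{prop1}\emph{(iii)} to get $\alpha(G-xy)>n(G)/2\geq\mu(G-xy)=\mu(G)$, whereas you reduce \emph{(i)} to the nonexistence of a perfect matching (via $\alpha(G)+\mu(G)=n(G)-1$ from Lemmas \ref{lem2} and \ref{lem6}) and rule one out by parity: some $v\in V(C)$ would be matched into a branch $B_{w}$, making $M\cap E(B_{w})$ a maximum matching of $B_{w}$ that misses $w$, so $w\in\mathrm{core}(B_{w})$ by Lemma \ref{lem0} and $G$ would be K\"{o}nig-Egerv\'{a}ry by Lemma \ref{lem5}. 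This is correct (it rests on the same structural fact the paper uses throughout, that each branch meets the rest of $G$ only in its attaching edge) and is arguably more elementary than the appeal to Theorem \ref{prop1}\emph{(iii)}.

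The one genuine gap is in part \emph{(iii)}, where you again depart from the paper (the paper takes an arbitrary maximum matching $M$, argues it saturates $N(\mathrm{core}(G))$, and swaps its edges there for those of $M_{1}$; you build $M^{\ast}=M_{C}\cup\bigcup_{x}M_{x}$ and must prove it maximum). Your stated reason, ``no attaching edge can enlarge $M^{\ast}$, so $\mu(G)=k+\sum_{x}\mu(B_{x})$ and $M^{\ast}$ is maximum,'' is a non sequitur: non-extendability by a single edge yields a \emph{maximal} matching, not a maximum one (the middle edge of $P_{4}$ is maximal). The identity $\mu(G)=k+\sum_{x}\mu(B_{x})$ is true, and the fact you cite -- every maximum matching of $B_{x}$ saturates $x$, by Proposition \ref{prop4}\emph{(i)} and Lemma \ref{lem0} -- is exactly the needed ingredient, but you must run an actual count: if a maximum matching $M$ of $G$ contains $t$ attaching edges, then for each such edge $xy$ the restriction $M\cap E(B_{x})$ misses $x$, hence is not maximum in $B_{x}$, so $\left\vert M\cap E(B_{x})\right\vert \leq\mu(B_{x})-1$; summing, $\left\vert M\right\vert \leq\left\vert M\cap E(C)\right\vert +\left(  \sum_{x}\mu(B_{x})-t\right)  +t\leq k+\sum_{x}\mu(B_{x})=\left\vert M^{\ast}\right\vert $. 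Alternatively, your ``sanity test'' is itself a proof: by Theorem \ref{th22} (or Proposition \ref{prop4}\emph{(ii)}) one gets $\alpha(G)=k+\sum_{x}\alpha(B_{x})$, so $\mu(G)=n(G)-1-\alpha(G)=k+\sum_{x}\left(  n(B_{x})-\alpha(B_{x})\right)  =k+\sum_{x}\mu(B_{x})$, using that each bipartite $B_{x}$ is K\"{o}nig-Egerv\'{a}ry. With either completion, $M^{\ast}$ is maximum, and the rest of your \emph{(iii)} goes through.
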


\begin{proof}
If $G$ is a K\"{o}nig-Egerv\'{a}ry graph, then \emph{(i)} follows from the
definition and the fact that $\mu\left(  G\right)  \leq n(G)/2$, while
\emph{(ii)}, \emph{(ii)} are true, by Theorem \ref{th1}\emph{(i)}.

For the rest of the proof, we suppose that $G$ is not a K\"{o}nig-Egerv\'{a}ry graph.

\emph{(i) }By Lemma \ref{lem2}, we have $n(G)-1=\alpha(G)+\mu(G)$. According
to Lemma \ref{lem6}, $\alpha(G-xy)=\alpha(G)+1$ holds for each edge $xy\in
E(C)$. Consequently, we get that $x,y\in$ $\mathrm{core}(G-xy)$. Since $G-xy$
is bipartite, Theorem \ref{prop1}\emph{(iii)} ensures that
\begin{gather*}
\alpha(G)+1=\alpha(G-xy)>n\left(  G-xy\right)  /2=n(G)/2\geq\\
\mu(G-xy)=n(G)-\alpha(G-xy)=n(G)-\alpha(G)-1=\mu(G),
\end{gather*}
which results in $\alpha(G)\geq\mu(G)$.

\emph{(ii) }If $\mathrm{core}(G)=\emptyset$, then the conclusion is clear.

Assume that $\mathrm{core}(G)\neq\emptyset$. By Theorem \ref{th1}\emph{(i)},
in each $B_{x}$ there is a matching $M_{x}$ from $N(\mathrm{core}(B_{x}))$
into $\mathrm{core}(B_{x})$. By Theorem \ref{prop1}\emph{(i)}, it follows that
$V(C)\cap N\left[  \mathrm{core}(G)\right]  =\emptyset$. Taking into account
Corollary \ref{cor1}\emph{(i)}, we see that the union of all these matchings
$M_{x}$ gives a matching from $N(\mathrm{core}(G))$ into $\mathrm{core}(G)$.

\emph{(iii)} Let $M$ be a maximum matching of $G$ and $M_{1}$ be a matching
from $N(\mathrm{core}(G))$ into $\mathrm{core}(G)$, that exists by Part
\emph{(ii)}. The matching $M$ must saturate $N(\mathrm{core}(G))$, because
otherwise it can be enlarged with edges from $M_{1}$. Hence, all the edges of
$M$ saturating $N(\mathrm{core}(G))$ can be replaced by the edges of $M_{1}$,
and the resulting matching is a maximum matching of $G$ that matches
$N(\mathrm{core}(G))$ into $\mathrm{core}(G)$.
\end{proof}

The almost bipartite graph $G$ from Figure \ref{fig53} has $M_{1}%
=\{uv,cx,dt,wy\}$ and $M_{2}=\{uv,ac,dt,wy\}$ as maximum matchings, but only
$M_{2}$ matches $N($\textrm{core}$(G))=\{c\}$ into \textrm{core}$(G)=\{a,b\}$.
Notice that $G$ is not a K\"{o}nig-Egerv\'{a}ry graph.

\begin{proposition}
\label{prop33}If there is a matching from $N(\mathrm{core}(G))$ into
$\mathrm{core}(G)$, then
\[
\alpha(G)-\mu(G)\leq\left\vert \mathrm{core}(G)\right\vert -\left\vert
N(\mathrm{core}(G))\right\vert .
\]

\end{proposition}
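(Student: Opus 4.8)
The plan is to build, from the hypothesized matching, a single matching of $G$ whose size is $\alpha(G)-\left\vert \mathrm{core}(G)\right\vert +\left\vert N(\mathrm{core}(G))\right\vert$; this bounds $\mu(G)$ from below by exactly that quantity and the stated inequality follows by rearranging. Write $C=\mathrm{core}(G)$ and $N=N(C)$, and let $M_{1}$ be the given matching from $N$ into $C$. By definition such a matching saturates $N$, so $\left\vert M_{1}\right\vert =\left\vert N\right\vert$, each of its edges joins a vertex of $N$ to a vertex of $C$, and $N\cap C=\emptyset$ since $C$ is independent. Now fix any $S\in\Omega(G)$. Because $C$ is contained in every maximum independent set we have $C\subseteq S$, and because $S$ is independent it contains no neighbor of $C$, i.e. $S\cap N=\emptyset$.

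Next I would apply Theorem \ref{prop1}\emph{(ii)} to get a matching $M_{S}$ from $S-C$ into $\mathrm{corona}(G)-S$. This matching saturates $S-C$, so $\left\vert M_{S}\right\vert =\left\vert S\right\vert -\left\vert C\right\vert =\alpha(G)-\left\vert C\right\vert$; moreover each edge of $M_{S}$ has one endpoint in $S-C$ and the other in $\mathrm{corona}(G)-S$, so no endpoint of $M_{S}$ lies in $C$. The endpoints lying in $S-C$ also avoid $N$, since $S\cap N=\emptyset$. The one point requiring a genuine argument is that the endpoints of $M_{S}$ lying in $\mathrm{corona}(G)-S$ avoid $N$ as well: if such an endpoint $v$ were in $N$, then $v$ would be adjacent to some $c\in C$; but $v\in\mathrm{corona}(G)$ means $v\in S'$ for some $S'\in\Omega(G)$, and $C\subseteq S'$, so $v$ and $c$ would be two adjacent vertices of the independent set $S'$, a contradiction. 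Hence every endpoint of $M_{S}$ lies in $V-N[C]$.

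Combining these observations, the endpoints of $M_{1}$ all lie in $N\cup C=N[C]$ while the endpoints of $M_{S}$ all lie in $V-N[C]$, so $M_{1}$ and $M_{S}$ are vertex-disjoint and $M_{1}\cup M_{S}$ is a matching of $G$ of size $\left\vert N\right\vert +\alpha(G)-\left\vert C\right\vert$. Therefore $\mu(G)\geq\alpha(G)-\left\vert C\right\vert +\left\vert N\right\vert$, which rearranges to $\alpha(G)-\mu(G)\leq\left\vert \mathrm{core}(G)\right\vert -\left\vert N(\mathrm{core}(G))\right\vert$. I expect the only nonroutine step to be the disjointness check, and within it the single subtle point that a $\mathrm{corona}(G)-S$ endpoint of $M_{S}$ cannot be adjacent to $\mathrm{core}(G)$; everything else is bookkeeping with the inclusions $C\subseteq S$ and $\mathrm{corona}(G)-S\subseteq V-S$.
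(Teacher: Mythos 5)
Your proof is correct and takes essentially the same route as the paper: combine the hypothesized matching $M_{1}$ with the matching from $S-\mathrm{core}(G)$ into $\mathrm{corona}(G)-S$ supplied by Theorem \ref{prop1}\emph{(ii)}, and read off the lower bound $\mu(G)\geq\left\vert N(\mathrm{core}(G))\right\vert +\alpha(G)-\left\vert \mathrm{core}(G)\right\vert$. In fact you are more careful than the paper, which writes $\left\vert M_{1}\right\vert +\left\vert M_{2}\right\vert \leq\mu(G)$ without justification, whereas you explicitly verify the vertex-disjointness of the two matchings --- including the genuinely nontrivial point that an endpoint in $\mathrm{corona}(G)-S$ cannot lie in $N(\mathrm{core}(G))$.
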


\begin{proof}
Let $M_{1}$ be a matching from $N(\mathrm{core}(G))$ into $\mathrm{core}(G)$.
According to Theorem \ref{prop1}\emph{(ii)}, there is a matching, say $M_{2}$,
from $S-\mathrm{core}(G)$ into $\mathrm{corona}(G)-S$. Consequently, we get
that
\begin{gather*}
\left\vert M_{1}\right\vert +\left\vert M_{2}\right\vert =\left\vert
N(\mathrm{core}(G))\right\vert +\left\vert S-\mathrm{core}(G)\right\vert \\
=\left\vert N(\mathrm{core}(G))\right\vert +\alpha(G)-\left\vert
\mathrm{core}(G)\right\vert \leq\mu(G),
\end{gather*}
and this completes the proof.
\end{proof}

\begin{theorem}
\label{th33}If $G$ is a connected almost bipartite graph, then
\[
\alpha(G)-\mu(G)\leq\left\vert \mathrm{core}(G)\right\vert -\left\vert
N(\mathrm{core}(G))\right\vert =d(G)\leq\alpha(G)-\mu(G)+1.
\]

\end{theorem}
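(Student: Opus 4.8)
The plan is to split on whether $G$ is K\"{o}nig-Egerv\'{a}ry. If $G$ is K\"{o}nig-Egerv\'{a}ry, then all three relations are immediate from Theorem \ref{th1}\emph{(ii)}, which gives $|\mathrm{core}(G)|-|N(\mathrm{core}(G))|=d(G)=\alpha(G)-\mu(G)$, so the displayed chain holds with the middle term equal to the left one. Hence I would concentrate on the case where $G$ is connected almost bipartite and \emph{not} K\"{o}nig-Egerv\'{a}ry, in which case Lemmas \ref{lem2} and \ref{lem6} give $n(G)-1=\alpha(G)+\mu(G)$ and make every edge of the odd cycle $C$ $\alpha$-critical. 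Write $k=|\mathrm{core}(G)|-|N(\mathrm{core}(G))|$.

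For the two outer inequalities I would argue as follows. The left inequality $\alpha(G)-\mu(G)\le k$ is exactly Proposition \ref{prop33} fed by the matching from $N(\mathrm{core}(G))$ into $\mathrm{core}(G)$ produced in Theorem \ref{th4}\emph{(ii)}; and since $\mathrm{core}(G)$ is independent, $k=d(\mathrm{core}(G))\le d(G)$ is free. For the right inequality I would fix any edge $e=xy\in E(C)$: then $G-e$ is bipartite, hence K\"{o}nig-Egerv\'{a}ry, and because $\alpha(G)+\mu(G)=n(G)-1$ the identity $\alpha(G-e)+\mu(G-e)=n(G)$ forces $\alpha(G-e)=\alpha(G)+1$ and $\mu(G-e)=\mu(G)$, so Theorem \ref{th1}\emph{(ii)} yields $d(G-e)=\alpha(G)-\mu(G)+1$. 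Deleting an edge can only shrink neighbourhoods, i.e. $N_{G-e}(A)\subseteq N_{G}(A)$ for every $A\subseteq V$, whence $d_{G-e}(A)\ge d_{G}(A)$ and therefore $d(G)\le d(G-e)=\alpha(G)-\mu(G)+1$.

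Collecting these, I would have $\alpha(G)-\mu(G)\le k\le d(G)\le\alpha(G)-\mu(G)+1$, so the whole statement reduces to the single equality $k=d(G)$, and in fact only the configuration $k=\alpha(G)-\mu(G)$ together with $d(G)=\alpha(G)-\mu(G)+1$ must be excluded. To do this I would show that $\mathrm{core}(G)$ is itself a critical set. Using the theorem of Zhang that $d(G)=id(G)$, pick an independent critical set $U$ with $d(U)=d(G)$; if $d(G)=\alpha(G)-\mu(G)+1=d(G-e)$, then $d_{G-e}(U)\ge d_{G}(U)=d(G-e)$ shows $U$ is critical in the bipartite graph $G-e$ as well, with $N_{G-e}(U)=N_{G}(U)$. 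The intended bridge is the identity $\mathrm{core}(G-e)=\mathrm{core}(G)\cup\{x,y\}$: the inclusion $\supseteq$ is clear, since every maximum independent set $S$ of $G-e$ must contain both $x$ and $y$ (otherwise $S$ would be independent in $G$ and too large), whence $S\setminus\{x\},S\setminus\{y\}\in\Omega(G)$ and every vertex of $\mathrm{core}(G)$ (which avoids $\{x,y\}$ by Corollary \ref{cor1}\emph{(ii)}) lies in $S$; the inclusion $\subseteq$ I would extract from the branch decomposition $\mathrm{core}(G)=\bigcup_{x}\mathrm{core}(B_{x})$ of Corollary \ref{cor1}\emph{(i)} together with Lemma \ref{lem0}. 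Running Theorem \ref{th1}\emph{(ii)} for $G-e$ through this identity would then transfer criticality back to $\mathrm{core}(G)$ inside $G$, forcing $k=\alpha(G)-\mu(G)+1=d(G)$ and contradicting $k=\alpha(G)-\mu(G)$.

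I expect the last step to be the crux: the two inequalities and the bound $k\le d(G)$ are routine, while the genuine difficulty is the equality $k=d(G)$, i.e. proving that the core is a critical set of $G$. The most delicate point is the reverse inclusion $\mathrm{core}(G-e)\subseteq\mathrm{core}(G)\cup\{x,y\}$ and, with it, the bookkeeping of how $N(\mathrm{core}(G))$ and $N_{G-e}(\mathrm{core}(G-e))$ differ across the two edges incident to the cycle --- precisely the place where the single odd cycle could, a priori, cost an extra unit of difference. An alternative route I would keep in reserve is to prove $\mathrm{ker}(G)=\mathrm{core}(G)$ for almost bipartite graphs (it is known for bipartite graphs); since the intersection of critical independent sets is again critical, $\mathrm{ker}(G)$ is a critical independent set, and this equality would give $d(G)=d(\mathrm{ker}(G))=d(\mathrm{core}(G))=k$ at once.
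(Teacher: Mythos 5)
Your first half is correct and coincides with the paper's: $\alpha(G)-\mu(G)\le k$ follows from Proposition \ref{prop33} fed by Theorem \ref{th4}\emph{(ii)}, $k\le d(G)$ is free since $\mathrm{core}(G)$ is independent (here $k=\left\vert \mathrm{core}(G)\right\vert -\left\vert N(\mathrm{core}(G))\right\vert$, as in your notation), and $d(G)\le d(G-e)=\alpha(G)-\mu(G)+1$ by deleting a cycle edge. The genuine gap is exactly at the crux you flagged: the bridge identity $\mathrm{core}(G-e)=\mathrm{core}(G)\cup\{x,y\}$ is false for connected almost bipartite non-K\"{o}nig-Egerv\'{a}ry graphs. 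Take $G=C_{5}$ and any cycle edge $e=xy$: then $\mathrm{core}(G)=\emptyset$, while $G-e=P_{5}$ has a unique maximum independent set, so $\mathrm{core}(G-e)$ has three vertices, one of which is neither $x$ nor $y$. Since your proposed derivation of the inclusion $\mathrm{core}(G-e)\subseteq\mathrm{core}(G)\cup\{x,y\}$ (from Corollary \ref{cor1}\emph{(i)} and Lemma \ref{lem0}) nowhere uses the contradiction hypothesis $d(G)=\alpha(G)-\mu(G)+1$, it would have to apply to $C_{5}$ as well, so it cannot be completed as described. Moreover, the closing step (``running Theorem \ref{th1}\emph{(ii)} for $G-e$ through this identity'') is left unexplained, and even granting the identity it does not close when $C$ is a triangle: there $x$ and $y$ share their third cycle neighbour $z$, so counting neighbourhoods only yields $d(G-e)\le k+1$, which is consistent with the configuration you are trying to exclude.

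Interestingly, the ingredients you assembled before the identity already suffice, by a different turn. Under the hypothesis $d(G)=d(G-e)$ you correctly produced an independent set $U$ of $G$ that is a critical independent set of the bipartite graph $G-e$. Now invoke the fact quoted in the paper's introduction that $\ker(H)=\mathrm{core}(H)$ holds for bipartite $H$: since $\ker(G-e)$ is contained in every critical independent set of $G-e$, we get $\mathrm{core}(G-e)\subseteq U$; and your correct half of the identity gives $x,y\in\mathrm{core}(G-e)$, hence $x,y\in U$, contradicting the independence of $U$ in $G$ because $xy\in E(G)$. This repair not only yields $k=d(G)$ but shows $d(G)=\alpha(G)-\mu(G)$ outright, i.e., it would even bypass the paper's later appeal to Theorem \ref{th55}. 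By contrast, the paper's own proof of the middle equality is entirely different and makes no contradiction argument: it takes a critical independent set $A$ of $G$, splits it along $V(C)$ and the bipartite branches $B_{x}$, bounds each branch contribution by $d(B_{x})=\left\vert \mathrm{core}(B_{x})\right\vert -\left\vert N(\mathrm{core}(B_{x}))\right\vert$ using Theorem \ref{th1}\emph{(ii)} and Proposition \ref{prop4}\emph{(i)}, and then sums over the branches via Corollary \ref{cor1}\emph{(i)}.
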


\begin{proof}
If $G$ is a K\"{o}nig-Egerv\'{a}ry graph, the result is true by Theorem
\ref{th1}\emph{(ii)}.

Otherwise, let $e\in E(C)$. Then $H=G-e$ is a bipartite graph, and by Lemma
\ref{lem6}, we get that $\alpha(H)=\alpha(G)+1$ and $\mu(H)=\mu(G)$. For every
$A\subseteq V(G)$, it follows that $\left\vert N_{H}(A)\right\vert
\leq\left\vert N_{G}(A)\right\vert $, which implies
\[
\left\vert A\right\vert -\left\vert N_{G}(A)\right\vert \leq\left\vert
A\right\vert -\left\vert N_{H}(A)\right\vert .
\]
Hence, using Theorem \ref{th4}, Proposition \ref{prop33}, and Theorem
\ref{th1}\emph{(ii)}, we obtain
\begin{align*}
\alpha(G)-\mu(G)  &  \leq\left\vert \mathrm{core}(G)\right\vert -\left\vert
N(\mathrm{core}(G))\right\vert \leq d(G)\leq d(H)=\\
&  =\alpha(H)-\mu(H)=\alpha(G)-\mu(G)+1.
\end{align*}

Let $A$ be some critical independent set of $G$. By Theorem \ref{th1}%
\emph{(ii)}, we have
\[
\left\vert A\cap V\left(  B_{x}\right)  \right\vert -\left\vert N_{B_{x}%
}(A\cap V\left(  B_{x}\right)  )\right\vert \leq\left\vert \mathrm{core}%
(B_{x})\right\vert -\left\vert N_{B_{x}}(\mathrm{core}(B_{x}))\right\vert
=d\left(  B_{x}\right)
\]
for every $x\in N_{1}\left(  C\right)  $. It is clear that
\[
\left\vert A\cap V\left(  B_{x}\right)  \right\vert -\left\vert N_{G}(A\cap
V\left(  B_{x}\right)  )\right\vert \leq\left\vert A\cap V\left(
B_{x}\right)  \right\vert -\left\vert N_{B_{x}}(A\cap V\left(  B_{x}\right)
)\right\vert
\]

Since, by Proposition \ref{prop4}\emph{(i)}, $x\notin\mathrm{core}(B_{x})$ for
every $x\in N_{1}\left(  C\right)  $, we have
\[
\left\vert \mathrm{core}(B_{x})\right\vert -\left\vert N_{B_{x}}%
(\mathrm{core}(B_{x}))\right\vert =\left\vert \mathrm{core}(B_{x})\right\vert
-\left\vert N_{G}(\mathrm{core}(B_{x}))\right\vert \text{.}%
\]
Thus, by Theorem \ref{th1}\emph{(ii)}, it follows
\[
\left\vert A\cap V\left(  B_{x}\right)  \right\vert -\left\vert N_{G}(A\cap
V\left(  B_{x}\right)  )\right\vert \leq\left\vert \mathrm{core}%
(B_{x})\right\vert -\left\vert N_{G}(\mathrm{core}(B_{x}))\right\vert .
\]

Consequently, we infer that $\left\vert A\right\vert -\left\vert
N(A)\right\vert \leq\left\vert B\right\vert -\left\vert N(B)\right\vert \leq
d(G)$, where
\[
A=\left(  A\cap C\right)  \cup%
{\displaystyle\bigcup\limits_{x\in N_{1}\left(  C\right)  }}
\left(  A\cap V\left(  B_{x}\right)  \right)  \text{ and }B=\left(  A\cap
C\right)  \cup%
{\displaystyle\bigcup\limits_{x\in N_{1}\left(  C\right)  }}
\mathrm{core}(B_{x}).
\]

Using Corollary \ref{cor1}\emph{(i)}, we deduce that%

\[
d(G)=%
{\displaystyle\sum\limits_{x\in N_{1}\left(  C\right)  }}
d(B_{x})=%
{\displaystyle\sum\limits_{x\in N_{1}\left(  C\right)  }}
\left(  \left\vert \mathrm{core}(B_{x})\right\vert -\left\vert N(\mathrm{core}%
(B_{x}))\right\vert \right)  =\left\vert \mathrm{core}(G)\right\vert
-\left\vert N(\mathrm{core}(G))\right\vert ,
\]
which completes the proof.
\end{proof}

\begin{theorem}
\cite{Bhattacharya2018}\label{th55} If $G$ is unicyclic and
non-K\"{o}nig-Egerv\'{a}ry, then $d(G)=\alpha(G)-%
\mu
(G)$.
\end{theorem}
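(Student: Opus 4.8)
The plan is to reduce the statement to the almost bipartite machinery already developed and then exploit the extra rigidity of the unicyclic structure to rule out the ``$+1$'' left open by Theorem \ref{th33}. We may assume $G$ is connected, since any acyclic components are bipartite K\"{o}nig-Egerv\'{a}ry graphs and so contribute equally to $d(G)$ and to $\alpha(G)-\mu(G)$. First I would observe that a unicyclic non-K\"{o}nig-Egerv\'{a}ry graph must be almost bipartite: its unique cycle cannot be even, for an even unique cycle would make $G$ bipartite and hence K\"{o}nig-Egerv\'{a}ry. Thus the unique cycle $C=C_{2k+1}$ is the unique odd cycle, Theorem \ref{th33} applies, and it yields
\[
\alpha(G)-\mu(G)\leq d(G)=\sum_{x\in N_{1}(C)}d(B_{x})\leq\alpha(G)-\mu(G)+1,
\]
so it remains only to prove $d(G)=\alpha(G)-\mu(G)$, i.e. that the leftmost inequality is an equality.

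The structural observation specific to the unicyclic case is that deleting the edges joining $V(C)$ to $N_{1}(C)$ leaves the cycle $C$ together with pairwise disjoint trees: a second edge from some $B_{x}$ to $C$, or a link between two of them off $C$, would create a second cycle. Hence each $x\in N_{1}(C)$ sits in its own tree $B_{x}$ and
\[
V(G)=V(C)\ \sqcup\bigsqcup_{x\in N_{1}(C)}V(B_{x}).
\]
In particular every $B_{x}$ is bipartite, hence a K\"{o}nig-Egerv\'{a}ry graph, so $d(B_{x})=\alpha(B_{x})-\mu(B_{x})$ by Theorem \ref{th1}\emph{(ii)}.

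Next I would compute $\alpha(G)$ and $\mu(G)$ along this partition. Since $G$ is not K\"{o}nig-Egerv\'{a}ry, Proposition \ref{prop4}\emph{(ii)} supplies $S\in\Omega(G)$ with $S\cap N_{1}(C)=\emptyset$, whence $\left\vert S\cap V(C)\right\vert=k$; by Theorem \ref{th22} each $S\cap V(B_{x})\in\Omega(B_{x})$, so summing over the partition gives $\alpha(G)=k+\sum_{x}\alpha(B_{x})$. Moreover Lemma \ref{lem2} forces $\alpha(G)+\mu(G)=n(G)-1$, and combining this with $n(G)=(2k+1)+\sum_{x}\left\vert V(B_{x})\right\vert$ and the K\"{o}nig-Egerv\'{a}ry identity $\left\vert V(B_{x})\right\vert-\alpha(B_{x})=\mu(B_{x})$ yields $\mu(G)=k+\sum_{x}\mu(B_{x})$. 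The terms $k$ now cancel, so
\[
\alpha(G)-\mu(G)=\sum_{x\in N_{1}(C)}\bigl(\alpha(B_{x})-\mu(B_{x})\bigr)=\sum_{x\in N_{1}(C)}d(B_{x})=d(G),
\]
the last equality being precisely the formula established in the proof of Theorem \ref{th33}. This gives the desired equality.

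The main obstacle I expect is justifying the two counting identities cleanly, above all $\mu(G)=k+\sum_{x}\mu(B_{x})$: rather than assembling a maximum matching by hand it seems safest to derive it from $\alpha(G)$ through the relation $\alpha(G)+\mu(G)=n(G)-1$, which is exactly where the non-K\"{o}nig-Egerv\'{a}ry hypothesis and Lemma \ref{lem2} do the work. The genuinely unicyclic-specific input is the clean vertex partition with each $B_{x}$ a tree; this is what may fail for general almost bipartite graphs, and it is what collapses the potential ``$+1$'' gap of Theorem \ref{th33} to $0$.
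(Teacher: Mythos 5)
Your proof is correct, but there is nothing in the paper to compare it with: Theorem \ref{th55} is not proved there at all. It is imported verbatim from \cite{Bhattacharya2018} and then used as the base case in the proof of Theorem \ref{th44}. What you have produced is therefore an internal derivation of the cited result from the paper's own machinery. The derivation is sound and, crucially, not circular: every ingredient you invoke (Lemma \ref{lem2}, Theorem \ref{th22}, Proposition \ref{prop4}, Corollary \ref{cor1}, Theorem \ref{th33}, and Theorem \ref{th1}\emph{(ii)} applied to the bipartite pieces $B_{x}$) is established in the paper before, and independently of, Theorem \ref{th55}; only Theorem \ref{th44} depends on it. Your genuinely unicyclic input---that the sets $V(B_{x})$, $x\in N_{1}(C)$, are pairwise disjoint trees partitioning $V(G)-V(C)$, each attached to $C$ by a single edge---is exactly what can fail for general almost bipartite graphs (two attachment vertices may lie in one bipartite piece, or a piece may attach to $C$ more than once), and it is what makes the counts $\alpha(G)=k+\sum_{x}\alpha(B_{x})$ (via Proposition \ref{prop4}\emph{(ii)} and Theorem \ref{th22}) and $\mu(G)=k+\sum_{x}\mu(B_{x})$ (via Lemma \ref{lem2} and the K\"{o}nig--Egerv\'{a}ry identity on each tree) come out cleanly, collapsing the $+1$ gap of Theorem \ref{th33}. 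Comparing routes: the paper handles general almost bipartite graphs by reducing them to unicyclic ones (spanning trees, Lemma \ref{lem7}) and quoting \cite{Bhattacharya2018} for the unicyclic case; your argument settles that unicyclic case from scratch, so combined with Theorem \ref{th44}'s reduction it makes the entire development self-contained and shows the external citation is not logically necessary.

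One small point of rigor: the last equality in your chain, $d(G)=\sum_{x\in N_{1}(C)}d(B_{x})$, is taken from inside the proof of Theorem \ref{th33} rather than from a stated result. If you prefer not to cite the interior of a proof, you can re-derive it: by Theorem \ref{th33}, $d(G)=\left\vert \mathrm{core}(G)\right\vert -\left\vert N(\mathrm{core}(G))\right\vert $; by Corollary \ref{cor1}\emph{(i)}--\emph{(ii)} together with your disjointness of the pieces, $\mathrm{core}(G)$ is the disjoint union of the sets $\mathrm{core}(B_{x})$ and $N(\mathrm{core}(G))$ is the disjoint union of the sets $N(\mathrm{core}(B_{x}))\subseteq V(B_{x})$, so $\left\vert \mathrm{core}(G)\right\vert -\left\vert N(\mathrm{core}(G))\right\vert =\sum_{x}\left(  \left\vert \mathrm{core}(B_{x})\right\vert -\left\vert N(\mathrm{core}(B_{x}))\right\vert \right)  $, which equals $\sum_{x}d(B_{x})$ by Theorem \ref{th1}\emph{(ii)}.
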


\begin{lemma}
\label{lem7}\cite{LevMan2001} Every connected bipartite graph has a spanning
tree with the same independence number.
\end{lemma}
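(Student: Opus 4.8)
The plan is to reduce this statement about independence numbers to one about matchings, exploiting that both $G$ and any spanning tree of it are K\"{o}nig-Egerv\'{a}ry graphs. First I would observe that for any spanning tree $T$ of $G$ we have $E(T)\subseteq E(G)$, and deleting edges never decreases the independence number, so $\alpha(T)\geq\alpha(G)$ holds automatically for \emph{every} spanning tree. The entire content of the lemma is therefore to exhibit a \emph{single} spanning tree achieving the reverse inequality $\alpha(T)\leq\alpha(G)$.

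The key observation is that a tree is bipartite, hence a K\"{o}nig-Egerv\'{a}ry graph, so $\alpha(T)=n(T)-\mu(T)=n(G)-\mu(T)$; likewise, since $G$ is bipartite, $\alpha(G)=n(G)-\mu(G)$. Thus $\alpha(T)=\alpha(G)$ if and only if $\mu(T)=\mu(G)$, and the problem becomes: find a spanning tree of $G$ whose matching number equals that of $G$. To build such a tree I would take a maximum matching $M$ of $G$, so $\left\vert M\right\vert =\mu(G)$. Being a set of pairwise non-incident edges, $M$ is acyclic, i.e., a forest; because $G$ is connected, this forest extends to a spanning tree $T$ of $G$ by adding edges one at a time without creating a cycle. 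Then $M\subseteq E(T)$ yields $\mu(T)\geq\left\vert M\right\vert =\mu(G)$, while $E(T)\subseteq E(G)$ yields $\mu(T)\leq\mu(G)$; hence $\mu(T)=\mu(G)$, and the two K\"{o}nig-Egerv\'{a}ry identities give $\alpha(T)=n(G)-\mu(G)=\alpha(G)$.

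I expect no serious obstacle here. The only steps requiring care are the standard matroid-theoretic fact that a forest in a connected graph can always be completed to a spanning tree, and the recognition that it is precisely König's theorem (applied to both the bipartite graph $G$ and the tree $T$) that converts the preserved matching number back into the preserved independence number. One could instead attempt a direct argument on independent sets, but the matching reduction is what makes the whole statement transparent.
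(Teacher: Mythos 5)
Your proposal is correct, but there is nothing in the paper to compare it against: Lemma \ref{lem7} is imported from \cite{LevMan2001} as a black box, with no proof given in this paper. Judged on its own merits, your argument is sound. The reduction from independence numbers to matching numbers is legitimate: a spanning tree $T$ satisfies $n(T)=n(G)$, and both $G$ (bipartite) and $T$ (a tree, hence bipartite) are K\"{o}nig-Egerv\'{a}ry graphs, so $\alpha(G)=n(G)-\mu(G)$ and $\alpha(T)=n(G)-\mu(T)$; thus it suffices to produce $T$ with $\mu(T)=\mu(G)$. Your construction does exactly that: a maximum matching $M$ of $G$ is an acyclic edge set, connectivity of $G$ lets you extend it to a spanning tree $T$ with $M\subseteq E(T)$, and then $\mu(G)=\left\vert M\right\vert \leq\mu(T)\leq\mu(G)$ forces $\mu(T)=\mu(G)$. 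Two remarks. First, connectivity is used precisely (and only) in the extension step, which is where it must appear. Second, your argument never uses bipartiteness of $G$ beyond the identity $\alpha(G)+\mu(G)=n(G)$; hence you have in fact proved the stronger statement that every connected K\"{o}nig-Egerv\'{a}ry graph has a spanning tree with the same independence number (and the same matching number), which is more than Lemma \ref{lem7} claims and would serve the paper equally well.
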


\begin{theorem}
\label{th44}If $G$ is an almost bipartite non-K\"{o}nig-Egerv\'{a}ry graph,
then
\[
d(G)=\alpha(G)-\mu(G)=\left\vert \mathrm{core}(G)\right\vert -\left\vert
N(\mathrm{core}(G))\right\vert .
\]

\end{theorem}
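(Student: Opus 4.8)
The plan is to upgrade the sandwich of Theorem \ref{th33} to an equality by exhibiting a unicyclic non-K\"{o}nig-Egerv\'{a}ry graph sitting inside $G$ and invoking Theorem \ref{th55}. First I would dispose of connectivity: since $G$ has a unique odd cycle, that cycle lies in a single component $G_{0}$, and every other component is bipartite, hence K\"{o}nig-Egerv\'{a}ry. As $\alpha$, $\mu$, $n$, $\mathrm{core}$, $N(\mathrm{core})$ and $d$ are all additive over connected components, and a bipartite component satisfies the desired identity by Theorem \ref{th1}\emph{(ii)}, the whole statement reduces to the case where $G$ is connected; moreover $G$ non-K\"{o}nig-Egerv\'{a}ry forces $G_{0}$ non-K\"{o}nig-Egerv\'{a}ry. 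So assume $G$ connected and non-K\"{o}nig-Egerv\'{a}ry. By Theorem \ref{th33} it suffices to prove $d(G)\le\alpha(G)-\mu(G)$, i.e.\ to rule out $d(G)=\alpha(G)-\mu(G)+1$.

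To do this I would build a spanning subgraph $G'$ of $G$ that is unicyclic, non-K\"{o}nig-Egerv\'{a}ry, and satisfies $\alpha(G')=\alpha(G)$ and $\mu(G')=\mu(G)$. Pick an edge $e=ab\in E(C)$ and set $H=G-ab$; then $H$ is connected bipartite, and by Lemma \ref{lem6} we have $\alpha(H)=\alpha(G)+1$ and $\mu(H)=\mu(G)$. Since $ab$ is $\alpha$-critical, every maximum independent set of $H$ contains both $a$ and $b$, so $a,b\in\mathrm{core}(H)$; and because $C$ is odd, the $a$--$b$ path $C-ab$ has even length, so $a,b$ lie in the same colour class of $H$. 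The key step is to produce a spanning tree $T$ of $H$ with $\alpha(T)=\alpha(H)$ and $a,b\in\mathrm{core}(T)$. Granting this, put $G'=T+ab$. Then $G'$ is a tree plus one edge, hence unicyclic, and its unique cycle (the $T$-path from $a$ to $b$ together with $ab$) is odd, since that $T$-path has even length. Because $T\subseteq G'\subseteq G$ are spanning and $\mu(T)=n(G)-\alpha(T)=n(G)-\alpha(H)=\mu(G)$, we get $\mu(G')=\mu(G)$; and since $a,b\in\mathrm{core}(T)$, every maximum independent set of $T$ meets $\{a,b\}$ in both vertices, so adding $ab$ drops the independence number by exactly one and $\alpha(G')=\alpha(T)-1=\alpha(G)$. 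Hence $\alpha(G')+\mu(G')=\alpha(G)+\mu(G)=n(G)-1=n(G')-1$, so $G'$ is non-K\"{o}nig-Egerv\'{a}ry. Theorem \ref{th55} then gives $d(G')=\alpha(G')-\mu(G')=\alpha(G)-\mu(G)$, while passing to a spanning subgraph cannot decrease the critical difference, so $d(G)\le d(G')$. Combined with Theorem \ref{th33} this yields $\alpha(G)-\mu(G)\le d(G)\le d(G')=\alpha(G)-\mu(G)$, i.e.\ the claimed equalities.

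The main obstacle is the construction of $T$: Lemma \ref{lem7} supplies a spanning tree with $\alpha(T)=\alpha(H)$ but gives no control over $\mathrm{core}(T)$, so I would instead work with matchings. Since $a,b\in\mathrm{core}(H)$ and $H$ is bipartite, Lemma \ref{lem0} yields maximum matchings $M_{a}$ and $M_{b}$ of $H$ missing $a$ and $b$ respectively. Any spanning tree $T$ whose edge set contains $M_{a}\cup M_{b}$ then has $\mu(T)=\mu(H)$ (as $M_{a}\subseteq T\subseteq H$), so $M_{a},M_{b}$ are maximum matchings of $T$ omitting $a,b$, whence $a,b\in\mathrm{core}(T)$ by Lemma \ref{lem0} again, and $\alpha(T)=n(G)-\mu(T)=\alpha(H)$. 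Thus it is enough to arrange that $M_{a}\cup M_{b}$ be a forest, after which it extends to a spanning tree of the connected graph $H$. Here is the delicate point: in $M_{a}\cup M_{b}$ every vertex has degree at most two, so its components are paths and even $(M_{a},M_{b})$-alternating cycles; and neither $a$ nor $b$ lies on any such cycle, since each is unsaturated by one of the two matchings and hence has degree at most one. For each alternating cycle $Z$ I would replace $M_{b}$ by $M_{b}\triangle Z$, which is again a maximum matching still missing $b$ (as $b\notin V(Z)$) and which now agrees with $M_{a}$ on $Z$; because the alternating cycles are vertex-disjoint, finitely many such swaps destroy all cycles and leave a forest $M_{a}\cup M_{b}^{\ast}$ with $a,b$ still exposed. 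Verifying that these swaps preserve maximality and reintroduce no cycles, together with the parity bookkeeping that makes the cycle of $G'$ odd, is where the care is needed; once the forest is in hand, the rest is the bookkeeping above.
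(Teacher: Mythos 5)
Your proof is correct, and while it shares the paper's skeleton --- collapse the sandwich of Theorem \ref{th33} by exhibiting a spanning unicyclic non-K\"{o}nig-Egerv\'{a}ry subgraph with the same $\alpha$ and $\mu$, apply Theorem \ref{th55} to it, and use the fact that deleting edges cannot decrease the critical difference --- the construction of that subgraph is genuinely different. The paper keeps the odd cycle $C$ intact and replaces each attached bipartite block $B_{x}$ by an $\alpha$-preserving spanning tree $T_{x}$ (Lemma \ref{lem7}); its verification that $\alpha$ and $\mu$ survive the replacement rests on Theorem \ref{th22} and Proposition \ref{prop4}. You instead delete one cycle edge $ab$, build a spanning tree $T$ of the bipartite remainder $H=G-ab$ that contains two maximum matchings missing $a$ and $b$ respectively (first making their union acyclic by switching $M_{b}\mapsto M_{b}\triangle Z$ along the $(M_{a},M_{b})$-alternating cycles, which avoid $a$ and $b$), so that Lemma \ref{lem0} forces $a,b\in\mathrm{core}(T)$, and then restore $ab$. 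This bypasses Theorem \ref{th22}, Corollary \ref{cor1}, Proposition \ref{prop4} and Lemma \ref{lem7} altogether, replacing them with Lemma \ref{lem0}, Lemma \ref{lem6} and elementary matching theory; in effect you prove a constructive strengthening of Lemma \ref{lem7}, namely a spanning tree of a bipartite graph preserving $\alpha$ whose core contains two prescribed core vertices. What the paper's route buys is economy, since Theorem \ref{th22} and Proposition \ref{prop4} were already established; what yours buys is independence from that structural machinery, at the cost of the more delicate alternating-cycle argument. The delicate points all check out: the cycles of $M_{a}\cup M_{b}$ are vertex-disjoint components avoiding $a$ and $b$, each switch preserves maximality and keeps $b$ exposed, the resulting union is a forest extendable to a spanning tree of the connected graph $H$, and $\alpha(G')=\alpha(T)-1=\alpha(G)$, $\mu(G')=\mu(G)$ follow as you argue. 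One small remark: the parity bookkeeping showing that the cycle of $G'=T+ab$ is odd is not actually needed to invoke Theorem \ref{th55}, since your count $\alpha(G')+\mu(G')=n(G')-1$ already certifies that the unicyclic graph $G'$ is not a K\"{o}nig-Egerv\'{a}ry graph.
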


\begin{proof}
\textit{Case 1.} $G$ is connected.

By Lemma \ref{lem7}, every bipartite subgraph $B_{x}$ of $G$ has a spanning
tree $T_{x}$, having the same independence number, and hence, the same
matching number, i.e., $\alpha\left(  T_{x}\right)  =\alpha\left(
B_{x}\right)  $\ and $\mu\left(  T_{x}\right)  =\mu\left(  B_{x}\right)  $.

Consequently, $\Omega\left(  B_{x}\right)  \subseteq\Omega\left(
T_{x}\right)  $, which gives $\mathrm{core}\left(  T_{x}\right)
\subseteq\mathrm{core}\left(  B_{x}\right)  $. By Theorem \ref{th22}, we have
that $\Omega\left(  G\right)  |_{V\left(  B_{x}\right)  }=\Omega\left(
B_{x}\right)  $.

Let $H$ be the graph obtained from $G$ by substituting every $B_{x}$ with an
appropriate $T_{x}$. Thus $H$ is a connected unicyclic graph, having $C$ as
its unique cycle.

Since $G$ is a non-K\"{o}nig-Egerv\'{a}ry graph, Proposition \ref{prop4}%
\textit{(i)} implies $x\notin\mathrm{core}(B_{x})$, for every $x\in N_{1}(C)$.
Therefore, $x\notin\mathrm{core}(T_{x})$, for every $x\in N_{1}(C)$.

\textit{Claim 1}. $d(G)\leq d(H)$.

Every independent set $S$ of $G$ is independent in $H$ as well, while
$N_{H}\left(  S\right)  \subseteq N_{G}\left(  S\right)  $. Hence,
\[
d_{G}\left(  S\right)  =\left\vert S\right\vert -\left\vert N_{G}\left(
S\right)  \right\vert \leq\left\vert S\right\vert -\left\vert N_{H}\left(
S\right)  \right\vert =d_{H}\left(  S\right)  .
\]
Thus, $d(G)\leq d(H)$.

\textit{Claim 2}. $\alpha(G)=\alpha(H)$.

Since $G$ and $H$ have the same vertex sets and $E\left(  H\right)  \subseteq
E\left(  G\right)  $, we get that $\alpha(G)\leq\alpha(H)$.

By Proposition \ref{prop4}\textit{(ii)}, there exists some $A\in\Omega(G)$,
such that $A\cap N_{1}(C)=\emptyset$. Hence,
\begin{gather*}
A\cap V(T_{x})=A\cap V(B_{x})\in\Omega\left(  B_{x}\right)  \subseteq
\Omega\left(  T_{x}\right)  \text{ for every }x\in N_{1}(C),\\
\left\vert A\cap V\left(  C\right)  \right\vert =\left\lfloor V\left(
C\right)  /2\right\rfloor ,\text{ and}\\
A=\left(  A\cap V\left(  C\right)  \right)  \cup%
{\displaystyle\bigcup\limits_{x\in N_{1}(C)}}
\left(  A\cap V(T_{x})\right)  .
\end{gather*}
Clearly, $A$ is an independent set in $H$ as well.

Let $S\in\Omega(H)$. Then, $\left\vert S\cap V\left(  C\right)  \right\vert
\leq\left\lfloor V\left(  C\right)  /2\right\rfloor =\left\vert A\cap V\left(
C\right)  \right\vert $, and also
\[
\left\vert S\cap V\left(  T_{x}\right)  \right\vert \leq\left\vert A\cap
V(T_{x})\right\vert \text{ for every }x\in N_{1}(C).
\]
Thus
\begin{gather*}
\alpha(H)=\left\vert S\right\vert =\left\vert \left(  S\cap V\left(  C\right)
\right)  \cup%
{\displaystyle\bigcup\limits_{x\in N_{1}(C)}}
\left(  S\cap V(T_{x})\right)  \right\vert =\\
\left\vert S\cap V\left(  C\right)  \right\vert +%
{\displaystyle\sum\limits_{x\in N_{1}(C)}}
\left\vert \left(  S\cap V(T_{x})\right)  \right\vert \leq\\
\left\vert A\cap V\left(  C\right)  \right\vert +%
{\displaystyle\sum\limits_{x\in N_{1}(C)}}
\left\vert \left(  A\cap V(T_{x})\right)  \right\vert =\left\vert A\right\vert
=\alpha(G)\text{.}%
\end{gather*}

In conclusion, we get that $\alpha(G)=\alpha(H)$.

\textit{Claim 3}. $\mu(G)=\mu(H)$.

Along the lines of the proof of Claim 2, we know that there exists a set
$A\in\Omega(H)$, such that $A\cap N_{1}(C)=\emptyset$. Therefore, Proposition
\ref{prop4} implies that $H$ is a non-K\"{o}nig-Egerv\'{a}ry graph. Hence,
\[
\alpha(G)+\mu(G)-1=n(G)=n(H)=\alpha(H)+\mu(H)-1.
\]
By Claim 2, it means that $\mu(G)=\mu(H)$.

\textit{Claim 4}. $d(G)=\alpha(G)-\mu(G)$.

By Claim 2, Claim 3, Theorem \ref{th33}, Claim 1, and Theorem \ref{th55}, we
finally obtain the following:%
\[
\alpha(H)-\mu(H)=\alpha(G)-\mu(G)\leq\left\vert \mathrm{core}(G)\right\vert
-\left\vert N(\mathrm{core}(G))\right\vert =d(G)\leq d(H)=\alpha(H)-\mu(H),
\]
which completes the proof.

\textit{Case 2.} $G$ is disconnected.

Clearly, $G=G_{1}\cup G_{2}$, where $G_{1}$ is the connected component of $G$
containing the unique odd cycle, and $G_{2}$ is a nonempty bipartite graph. By
Case 1,
\[
d(G_{1})=\alpha(G_{1})-%
\mu
(G_{1})=\left\vert \mathrm{core}(G_{1})\right\vert -\left\vert N(\mathrm{core}%
(G_{1}))\right\vert ,
\]
while Theorem \ref{th1}\emph{(ii) }implies
\[
d(G_{2})=\alpha(G_{2})-%
\mu
(G_{2})=\left\vert \mathrm{core}(G_{2})\right\vert -\left\vert N(\mathrm{core}%
(G_{2}))\right\vert .
\]
Since
\begin{gather*}
d(G)=d(G_{1})+d(G_{2}),\alpha(G)=\alpha(G_{1})+\alpha(G_{2}),%
\mu
(G)=%
\mu
(G_{1})+%
\mu
(G_{2}),\\
\mathrm{core}(G)=\mathrm{core}(G_{1})\cup\mathrm{core}(G_{2}),N(\mathrm{core}%
(G))=N(\mathrm{core}(G_{1}))\cup N(\mathrm{core}(G_{2})),
\end{gather*}
we conclude with
\begin{gather*}
d(G)=\alpha(G_{1})-%
\mu
(G_{1})+\alpha(G_{2})-%
\mu
(G_{2})=\alpha(G)-%
\mu
(G)=\\
\left\vert \mathrm{core}(G_{1})\right\vert -\left\vert N(\mathrm{core}%
(G_{1}))\right\vert +\left\vert \mathrm{core}(G_{2})\right\vert -\left\vert
N(\mathrm{core}(G_{2}))\right\vert =\left\vert \mathrm{core}(G)\right\vert
-\left\vert N(\mathrm{core}(G))\right\vert ,
\end{gather*}
as required.
\end{proof}

\section{Conclusions}

It is known that for every graph $\max\{0,\alpha(G)-\mu(G)\}\leq d(G)$
\cite{Levman2011a,Lorentzen1966,Schrijver2003}, while $\left\vert
\mathrm{core}(G)\right\vert -\left\vert N(\mathrm{core}(G))\right\vert \leq
d(G)$ by definition of $d(G)$.

By Theorems \ref{th1}, \ref{th44} $d(G)=\alpha(G)-\mu(G)=\left\vert
\mathrm{core}(G)\right\vert -\left\vert N(\mathrm{core}(G))\right\vert $ for
both K\"{o}nig-Egerv\'{a}ry graphs and almost bipartite graphs. Otherwise,
every relation between $\alpha(G)-\mu(G)$ and $\left\vert \mathrm{core}%
(G)\right\vert -\left\vert N(\mathrm{core}(G))\right\vert $ is possible. For
instance, the non-K\"{o}nig-Egerv\'{a}ry graphs from Figure \ref{fig11}
satisfy
\begin{align*}
\alpha(G_{1})-\mu(G_{1})  &  =0=\left\vert \mathrm{core}(G_{1})\right\vert
-\left\vert N(\mathrm{core}(G_{1}))\right\vert =d(G_{1}),\\
\alpha(G_{2})-\mu(G_{2})  &  =1<2=\left\vert \mathrm{core}(G_{2})\right\vert
-\left\vert N(\mathrm{core}(G_{2}))\right\vert =d(G_{2}).
\end{align*}

\begin{figure}[h]
\setlength{\unitlength}{1cm}\begin{picture}(5,2.2)\thicklines
\multiput(2.5,0.5)(1,0){4}{\circle*{0.29}}
\multiput(3.5,1.5)(1,0){3}{\circle*{0.29}}
\put(2.5,0.5){\line(1,0){3}}
\put(2.5,0.5){\line(1,1){1}}
\put(3.5,1.5){\line(1,0){1}}
\put(5.5,0.5){\line(0,1){1}}
\put(4.5,0.5){\line(0,1){1}}
\put(2.5,0.1){\makebox(0,0){$c$}}
\put(3.5,0.1){\makebox(0,0){$d$}}
\put(4.5,0.1){\makebox(0,0){$e$}}
\put(5.5,0.1){\makebox(0,0){$f$}}
\put(5.5,1.85){\makebox(0,0){$g$}}
\put(4.5,1.85){\makebox(0,0){$a$}}
\put(3.5,1.85){\makebox(0,0){$b$}}
\put(1.7,1){\makebox(0,0){$G_{1}$}}
\multiput(8,1)(3,0){2}{\circle*{0.29}}
\multiput(9,0)(1,0){2}{\circle*{0.29}}
\multiput(9,2)(1,0){2}{\circle*{0.29}}
\multiput(12,0)(0,1){3}{\circle*{0.29}}
\put(8,1){\line(1,0){4}}
\put(8,1){\line(1,1){1}}
\put(8,1){\line(1,-1){1}}
\put(8,1){\line(2,1){2}}
\put(8,1){\line(2,-1){2}}
\put(9,0){\line(0,1){2}}
\put(9,0){\line(1,2){1}}
\put(9,0){\line(2,1){2}}
\put(9,0){\line(1,0){1}}
\put(10,0){\line(1,1){1}}
\put(9,2){\line(1,0){1}}
\put(9,2){\line(1,-2){1}}
\put(9,2){\line(2,-1){2}}
\put(10,2){\line(1,-1){2}}
\put(10,0){\line(0,1){2}}
\put(10,0){\line(1,1){2}}
\put(11,1.3){\makebox(0,0){$v$}}
\put(12.3,0){\makebox(0,0){$x$}}
\put(12.3,1){\makebox(0,0){$y$}}
\put(12.3,2){\makebox(0,0){$z$}}
\put(7.1,1){\makebox(0,0){$G_{2}$}}
\end{picture}\caption{$\mathrm{core}(G_{1})=\emptyset$, while $\mathrm{core}%
(G_{2})=\{x,y,z\}$ and $N(\mathrm{core}(G_{2}))=\{v\}$.}%
\label{fig11}%
\end{figure}

The opposite direction of the displayed inequality may be found in
$G_{3}=K_{2n}-e,n\geq3$, where
\[
d(K_{2n}-e)=0>\alpha(G_{3})-\mu(G_{3})=2-n>2-(2n-2)=\left\vert \mathrm{core}%
(G_{3})\right\vert -\left\vert N(\mathrm{core}(G_{3}))\right\vert .
\]

Another example reads as follows:%
\[
\alpha(G)-\mu(G)=2<\left\vert \mathrm{core}(G)\right\vert -\left\vert
N(\mathrm{core}(G))\right\vert =3<4=d(G),
\]
where $G$ is from Figure \ref{fig111}.

\begin{figure}[h]
\setlength{\unitlength}{1cm}\begin{picture}(5,2.2)\thicklines
\multiput(5,1)(3,0){2}{\circle*{0.29}}
\multiput(6,0)(1,0){2}{\circle*{0.29}}
\multiput(6,2)(1,0){2}{\circle*{0.29}}
\multiput(9,0)(0,1){3}{\circle*{0.29}}
\multiput(10,0)(0,1){3}{\circle*{0.29}}
\multiput(3,1)(1,0){2}{\circle*{0.29}}
\multiput(3,2)(1,0){2}{\circle*{0.29}}
\put(3,1){\line(1,0){2}}
\put(3,1){\line(1,1){1}}
\put(3,1){\line(0,1){1}}
\put(3,2){\line(1,0){1}}
\put(3,2){\line(1,-1){1}}
\put(4,1){\line(0,1){1}}
\put(5,1){\line(1,-1){1}}
\put(6,0){\line(0,1){2}}
\put(6,0){\line(1,2){1}}
\put(6,0){\line(2,1){2}}
\put(6,0){\line(1,0){1}}
\put(7,0){\line(1,1){1}}
\put(6,2){\line(1,0){1}}
\put(6,2){\line(1,-2){1}}
\put(6,2){\line(2,-1){2}}
\put(7,2){\line(1,-1){1}}
\put(7,0){\line(0,1){2}}
\put(7,0){\line(1,1){1}}
\put(8,1){\line(1,0){2}}
\put(9,1){\line(1,1){1}}
\put(9,1){\line(1,-1){1}}
\put(9,0){\line(0,1){2}}
\put(9.3,2){\makebox(0,0){$x$}}
\put(9.3,0){\makebox(0,0){$v$}}
\put(10.3,2){\makebox(0,0){$y$}}
\put(10.3,1){\makebox(0,0){$z$}}
\put(10.3,0){\makebox(0,0){$u$}}
\put(5,1.3){\makebox(0,0){$w$}}
\put(2.1,1){\makebox(0,0){$G$}}
\end{picture}\caption{$\mathrm{core}(G)=\{x,y,z,u,v,w\}$, $\alpha\left(
G\right)  =8$ and $\mu\left(  G\right)  =6$.}%
\label{fig111}%
\end{figure}

\begin{problem}
Characterize graphs enjoying $d(G)=\alpha(G)-\mu(G)=\left\vert \mathrm{core}%
(G)\right\vert -\left\vert N(\mathrm{core}(G))\right\vert $.
\end{problem}

\end{document}